\newtheorem{lemma}{Lemma}
\begin{document}
\title{Wavelet-Guided Water-Level Estimation for ISAC}
\author{
{Ayoob~Salari, Kai~Wu,~\IEEEmembership{Senior~Member,~IEEE}, Khawaja~Fahad~Masood,~\IEEEmembership{Member,~IEEE},\\
Y. Jay~Guo,~\IEEEmembership{Fellow,~IEEE}, and J. Andrew~Zhang,~\IEEEmembership{Senior~Member,~IEEE} }
\thanks{Ayoob Salari, Kai Wu, Fahad Masood, Y. Jay Guo and J. Andrew Zhang are with Global Big Data Technologies Centre, the University of Technology Sydney, Ultimo, NSW 2007, Australia (e-mail: \{ayoob.salari, kai.wu, khawajafahad.masood, jay.guo, andrew.zhang\}@uts.edu.au).}
}
\IEEEtitleabstractindextext{%
\begin{minipage}{\textwidth}%
\begin{abstract}
Real-time water-level monitoring across many locations is vital for flood response, infrastructure management, and environmental forecasting. Yet many sensing methods rely on fixed instruments—acoustic, radar, camera, or pressure probes—that are costly to install and maintain and are vulnerable during extreme events. We propose a passive, low-cost water-level tracking scheme that uses only LTE downlink power metrics reported by commodity receivers. The method extracts per-antenna RSRP, RSSI, and RSRQ, applies a continuous wavelet transform (CWT) to the RSRP to isolate the semidiurnal tide component, and forms a summed-coefficient signature that simultaneously marks high/low tide (tide-turn times) and tracks the tide-rate (flow speed) over time. These wavelet features guide a lightweight neural network that learns water-level changes over time from a short training segment.
Beyond a single serving base station, we also show a multi-base-station cooperative mode: independent CWTs are computed per carrier and fused by a robust median to produce one tide-band feature that improves stability and resilience to local disturbances. Experiments over a 420 m river path under line-of-sight conditions achieve root-mean-square and mean-absolute errors of 0.8 cm and 0.5 cm, respectively. Under a non-line-of-sight setting with vegetation and vessel traffic, the same model transfers successfully after brief fine-tuning, reaching 1.7 cm RMSE and 0.8 cm MAE. Unlike CSI-based methods, the approach needs no array calibration and runs on standard hardware, making wide deployment practical. When signals from multiple base stations are available, fusion further improves robustness.

\end{abstract}

\begin{IEEEkeywords}
ISAC, LTE received-power, Tide, RSRP, RSSI, RSRQ, Water Level Sensing, Wavelet Transformation.
\end{IEEEkeywords}
\end{minipage}
}

\maketitle

\section{Introduction}
\label{sec:introduction}
%

\IEEEPARstart{F}{loods} are the most common and most expensive natural disasters on the planet.  Between 2000 and 2019 they claimed more than 125 000 lives and caused about US\$650 billion in direct damage worldwide \cite{GAR19}.  Recent floods in New South Wales (NSW), Australia, show the local danger: the 2022–23 La Niña season broke a dozen river-level records, triggered the longest continuous state emergency service deployment on record, and left the insurance sector with losses of roughly AU\$5 billion \cite{ICA23}. Up-to-the-minute water-level information is therefore essential for issuing public warnings, closing roads and rail lines, planning dam releases, and modelling insurance exposure \cite{WMO21}.  Unfortunately, the fixed sensor network is still too sparse in many regions, and sensor themselves are costly to install, hard to maintain, and often swept away by the very floods they monitor.  Expanding coverage with low-cost, robust sensing has become a headline target of the United Nations “Early Warnings for All” initiative \cite{UN22}.


Conventional water-level monitoring still relies mainly on fixed sensors placed in the river.  The simplest is the staff gauge: a painted ruler bolted to a bridge pier that operators read manually or with a camera.  A step up is the vented pressure sensor, a small probe lying on the riverbed whose internal vent tube removes barometric effects so that the recorded pressure translates directly to depth.  Both devices need regular cleaning, checks that the height reference (the datum) has not shifted, and occasional flushing of the pipe that shelters the probe and damps out waves (often called a stilling well)~\cite{Rantz1982,WMO2010}.
To cut this maintenance, many sites now fit non-contact acoustic rangers: an ultrasonic head under the bridge sends a short tone and times the echo from the water surface~\cite{Muste2016}. Low-power radar units work in the same way but are immune to air temperature and condensation, and are becoming common on bridges and small overflow structures~\cite{Birkholzer2020}.  Satellite altimeters such as Jason-3 or Sentinel-3 give global coverage, but only every 9–27 days and with 10–20 cm accuracy once river-slope corrections are applied~\cite{Shao2019}. 
The newest alternative is camera sensing: a bank-mounted RGB camera tracks a painted marker or shoreline texture and can reach 2–3 cm when light is good, though accuracy falls in rain or fog~\cite{Hou2019}.  


Despite steady progress, practical obstacles still limit how far the above instruments can be rolled out.  First is cost: a pressure-probe or radar node, its solar supply, and a telemetry modem together exceed US\$2000, and dependable flood mapping needs many of them along every reach \cite{Muste2016}.  Once installed, storm maintenance becomes the dominant expense—debris and sediment clog stilling wells, ultrasonics mist over, and floating logs can shear an entire gauge from its mount \cite{Rantz1982}.  Even where budgets allow, coverage gaps remain: the United States operates about 8000 river gauges for 5 million kilometres of waterways, leaving most urban catchments unmonitored \cite{USGS2023}.  At the other extreme, satellite altimeters see every major river but only every 9–27 days; their 10–20 cm uncertainty is acceptable for climatology yet too coarse for flash-flood operations \cite{Shao2019}.  Camera systems are affordable, but heavy rain, fog, or power cuts can disable them when water levels rise \cite{Hou2019}.  
In short, current methods struggle to be simultaneously dense, robust, and cheap. These limitations motivate the present study: repurposing the already-present 4G/5G radio  infrastructure—towers, handsets, and routine control signalling—as a complementary, low-maintenance water-level sensor network.


Mobile broadband networks present an attractive sensing platform that already covers most populated terrain.  The three
Australian operators, for example, run more than 27000 4G/5G cells— on average one site every 4 km in metropolitan New South Wales, Australia —and similar densities are reported across Europe and North America \cite{GSMAInfra2024}.  Base station panels are typically mounted 15–40 m above ground on rooftops or monopoles, safely above even major flood crests, and every site carries standby power specified to survive at least \SI{8}{h} of outage \cite{ACMAResilience2022}.  
Importantly, coverage is overlapping: a suburban handset sees three to six macro-cells from different operators and sectors, so the loss of any one tower leaves multiple independent links that can used.  Because environmental information is embedded in routine downlink control signals — reference-signal received power (RSRP), signal strength (RSSI), and quality (RSRQ)— no extra spectrum, user handset modification, or licensed transmitter is required; only passive measurements at the receiver end \cite{PoleseComSensReview2022}.  These traits—dense footprint, flood-line elevation, built-in redundancy, and zero spectrum cost—make cellular infrastructure a compelling candidate for augmenting conventional water monitoring.


Several RF-based studies have shown that environmental conditions can be inferred from changes in wireless signal propagation. Rainfall estimation has been demonstrated using commercial microwave backhaul links by correlating received signal strength with attenuation \cite{Leijnse07Rain,Overeem2013RainLink,Kjeldsen2021Rain5G}. Other work has exploited channel-state information (CSI) from Wi-Fi and LTE testbeds to estimate soil moisture \cite{Chen2017SoilCSI,Li2020LoRaMoist}, snow depth \cite{Larson2009SnowGNSS}, and small-scale water levels in controlled tanks \cite{Guo2021CSIWater}. Passive backscatter tags have achieved millimetre-precision in lab-scale flood gauging \cite{Chu2018BackscatterFlood}, and low-power LoRa nodes have tracked tidal fronts in salt marshes \cite{Li2023LoRaTide}.
Among these, the most comprehensive CSI-based framework to date is PMNs-WaterSense \cite{wang2025water}, which extracts Doppler–delay–angle features from LTE and mmWave CSI using multi-domain filtering and Kalman-based phase unwrapping. While it reports sub-centimetre precision in controlled lab tests, its outdoor trial yields a 4.8 cm mean error and requires 30-minute CSI windows. More critically, the approach depends on raw CSI capture via SDR platforms and calibrated multi-antenna arrays—tools not available on commodity LTE devices.
Despite their promise, these methods face three key barriers to wide-scale flood sensing: (i) reliance on high-rate CSI or custom hardware, (ii) sensitivity to calibration and testbed setup, and (iii) limited multi-site, long-term validation. In contrast, our work uses only LTE power indicators (RSRP, RSSI, RSRQ) available on unmodified devices, introduces a wavelet-guided feature that isolates tidal dynamics, and demonstrates generalisation across two geometrically distinct river sites captured months apart.


In this work, we propose a wavelet-guided water level estimation framework that leverages downlink LTE signal power—specifically RSRP, RSSI, and RSRQ—as a passive hydrological sensing modality. Unlike CSI-based methods that rely on SDR hardware, wideband measurements, and antenna calibration, our system uses only standards-compliant LTE power metrics, accessible via commercial modems and handsets. We begin with a physical two-ray model that analytically links tide height to received power, and then introduce a wavelet-based feature extractor that identifies tide-driven power variations with \SI{5}{min} latency. These features, feed into a lightweight neural regressor that continuously estimates water level with centimetre-scale accuracy—without requiring transmitter-side control, infrastructure changes, or offline retraining. In addition, we extend the wavelet stage to fuse measurements from multiple base stations, which increases robustness while preserving the same online detector.

Our main contributions are:
\begin{enumerate}
  \item 
  We present the first field-scale study to passively monitor tidal water levels using RSRP, RSSI, and RSRQ from a live LTE macro-cell. The receiver setup includes multi-antenna SDRs facing a 35~m high base station across a 420~m river span. Unlike prior lab-scale studies using CSI, our work demonstrates that standard LTE control-channel power measurements are sufficiently sensitive to detect tidal effects in realistic outdoor environments.
  
  \item 
  We derive and validate a continuous wavelet transform (CWT) feature, that aggregates coefficients within the semidiurnal tide band and analytically shows it tracks the absolute value of the tide-change rate. This feature marks high/low tide (tide-turn times) and peak-flow instants in real time using a forward-buffered minimum/maximum detector. The wavelet method is mathematically linked to the two-path power model and is robust to boat wakes, clutter, and short data gaps.

  \item 
  Using a single-hidden-layer neural network trained on engineered wavelet and geometric features, we achieve \SI{0.8}{cm} RMSE in LOS tests and \SI{1.7}{cm} RMSE under obstructed, vegetation-covered NLOS settings. Crucially, we demonstrate cross-date generalisation: a model trained on one site and date transfers to a different day and geometry with no architectural change and only brief fine-tuning, highlighting the geometry-invariant structure of the features.

  \item We extend the wavelet stage to a multi–base-station setting and demonstrate median-fusion of per-carrier CWT magnitude sums from two operators. The fused tide-band feature follows the tide-rate while preserving clear maxima at peak flow and minima at tide-turn times.

  \item 
  We conduct an additional deployment at a separate site with co-located sonar ground truth and stable mains power, enabling multi-hour to multi-day captures. Continuous wavelet analysis of the water-level record directly recovers the semidiurnal tidal period, and the RSRP scalogram exhibits a stable low–millihertz energy band consistent with tide-driven propagation. Building on this long-duration dataset, we develop a compact feed-forward model that accurately estimates water level from LTE power features; on a two-day held-out test interval, it closely tracks the tidal envelope.

  \item 
  We release a complete end-to-end software stack—from LTE decoding and power extraction, through wavelet processing and feature engineering, to lightweight neural inference. The pipeline enables other researchers or agencies to replicate, adapt, or deploy our system using affordable SDR hardware and commodity LTE downlink signals, without requiring access to CSI or licensed transmitters.
\end{enumerate}

The remainder of this paper is structured as follows. Section~\ref{sec:sysmodel} introduces the physical geometry and derives a two-ray model linking received power to tide height. Section~\ref{sec:wavelet} develops the wavelet representation that isolates tide-driven content, defines the summed-coefficient feature linked to tide-rate, and extends the method to fuse measurements from multiple base stations. Section~\ref{sec:exp} presents the experimental study, covering field deployments (6~Sep, 2~Dec, and a long-duration capture), preprocessing and wavelet diagnostics, feature engineering, the neural estimator, and quantitative results for same-day, cross-date, and multi–base-station settings. Section~\ref{sec:concl} concludes the paper.

%
%
%
%
\section{System Model}\label{sec:sysmodel}

We model how the tide height \(h(t)\) (measured from mean sea level) changes the received power in a line-of-sight radio link.  Geometry and hardware numbers are left for Section~\ref{sec:exp}.

\begin{figure*}[t]
  \centering
  \begin{subfigure}[b]{0.5\textwidth}
    \centering
    \includegraphics[width=\linewidth]{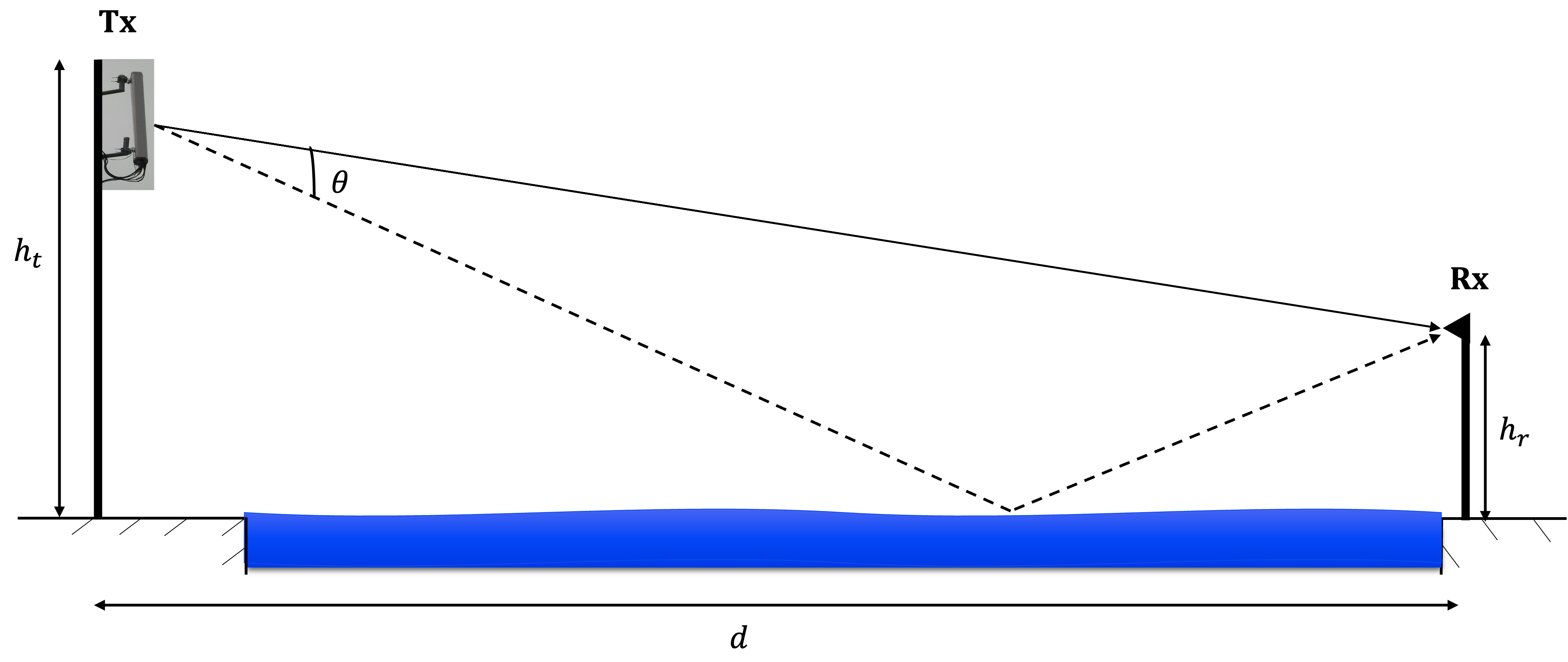}
    \caption{Two-ray geometry with direct and single-bounce paths.}
    \label{fig:tworay}
  \end{subfigure}
  \begin{subfigure}[b]{0.45\textwidth}
    \centering
    \includegraphics[width=\linewidth]{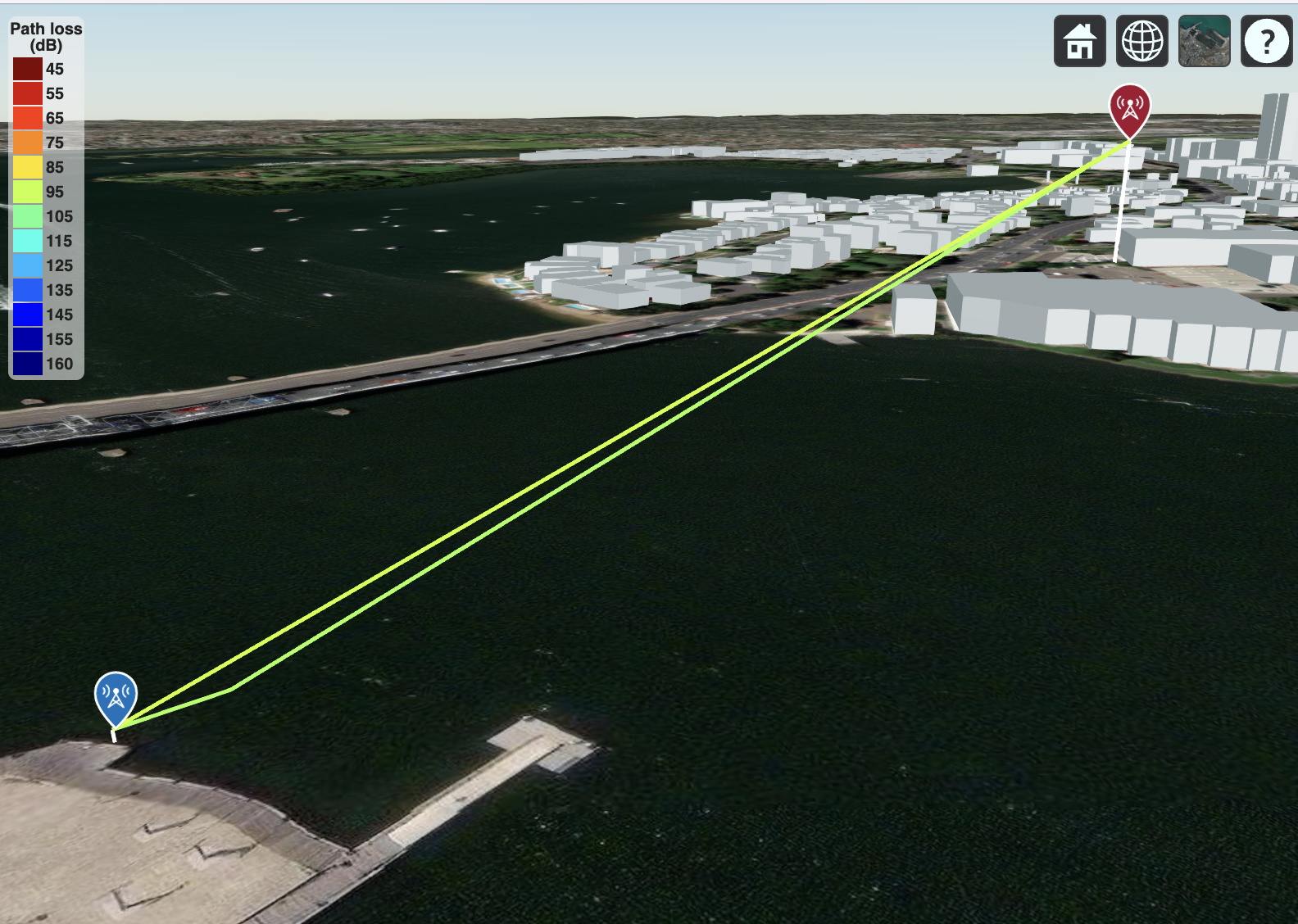}
    \caption{Matlab Raytracing modelling of Experimental setup.}
    \label{fig:raytracing}
  \end{subfigure}

  \caption{Experiment set-up: (a) geometry, (b) Raytracing model}
  \label{fig:setup}
\end{figure*}

\subsection{Two-Ray Channel}

A transmitter at height \(h_t\) and a receiver at height \(h_r\) face each other across a water surface separated by horizontal distance \(d\) (Fig.~\ref{fig:tworay}).  
The surface is smooth compared with the first Fresnel zone, so one specular\footnote{Mirror-like: angle of incidence equals angle of reflection.} echo is enough to capture the dominant multipath~\cite{Balanis2021}.\\
With one direct and one reflected path the complex baseband signal on receive antenna \(k\) is
\begin{equation}
r_k(t)=\frac{A_k}{d}\,e^{-j 2\pi d/\lambda}
\Bigl[1+\rho(\theta)
      \,e^{-\tfrac{2\pi}{\lambda}\,\Delta d(h(t))}\Bigr]+n_k(t),
\label{eq:rx_cplx}
\end{equation}
where \(\lambda\) is the wavelength and
\begin{equation}
\Delta d(h)=
\sqrt{d^{2}\!+\!\bigl(h_t+h_r-2h\bigr)^{\!2}}
-\sqrt{d^{2}\!+\!(h_t-h_r)^{\!2}}.
\label{eq:delta_exact}
\end{equation}
Because \(d\gg h_t,h_r\) the square roots expand in first order to
\begin{equation}
\Delta d(h)\;\simeq\;\frac{2h\,(h_t-h_r)}{d}.
\label{eq:delta_linear}
\end{equation}

\subsection{Received-Power Variation}

Taking the squared magnitude of \eqref{eq:rx_cplx} gives
\begin{equation}
P_{r,k}(h)=
P_{0,k}\Bigl[1+|\rho|^{2}
      +2|\rho|\cos\!\bigl(\tfrac{2\pi}{\lambda}\,\Delta d(h)+\phi_\rho\bigr)\Bigr],
\label{eq:prh}
\end{equation}
where \(\phi_\rho=\arg\rho(\theta)\).  
Modern LTE and 5 G panels use \(\pm45^{\circ}\) cross-polar ports, so the field has equal vertical and horizontal parts.  Near the horizon (\(\theta\!<\!10^{\circ}\)) the vertical part reflects with phase \(\pi\) and
the horizontal part with phase \(0\)~\cite{3GPP38901}.  The sum is still almost unit amplitude with phase \(\approx\pi\); hence we take \(|\rho|\simeq1\) and \(\phi_\rho\simeq\pi\). \\
A full \(2\pi\) cycle of the cosine happens when the tide shifts by
\begin{equation}
\Delta h_{\text{cycle}}=\frac{\lambda\,d}{2\,(h_t-h_r)}.
\label{eq:cycle}
\end{equation}
Because \(\lambda\) is smaller at higher carrier frequency, which in turn improves the resolution
with which gradual water-level changes can be tracked.
Equations \eqref{eq:prh}–\eqref{eq:cycle} show directly that the received
power parameters (RSRP, RSSI, RSRQ) vary in step with the instantaneous
tide height \(h(t)\).


\subsection{Simulation–Experiment Cross-Check}

The two-ray model in~\eqref{eq:rx_cplx} assumes  
(i) a flat water surface that acts as a single mirror,
(ii) only one dominant reflected path, and  
(iii) no time variation other than the slow change in tide height.  
Before relying on these assumptions, we verify them against a river-bank measurements.

\paragraph{Simulator.}  
Using the link geometry \((h_t,h_r,d)\), the wavelength \(\lambda\), and a specified tide profile \(h(t)\), we developed a simulator that computes the complex voltages of the direct and single-bounce paths for each water-height sample, superposes them, and converts the power to RSRP.
No fitting, calibration, or learning is involved; the output depends solely on the physics captured by the two-ray equations.

\paragraph{Comparison with field data.}  
Figure~\ref{fig:sim_vs_meas} overlays the simulated envelope (red) on LTE RSRP measured next to the river (blue; instrumentation details are given in Section~\ref{sec:exp}). Across the full range of tide heights, the simulated curve aligns with the measured values.
The small point-to-point scatter is expected—secondary reflections from banks and moving boats are not included in the first-order model.

\begin{figure}[t]
  \centering
  \includegraphics[width=\linewidth]{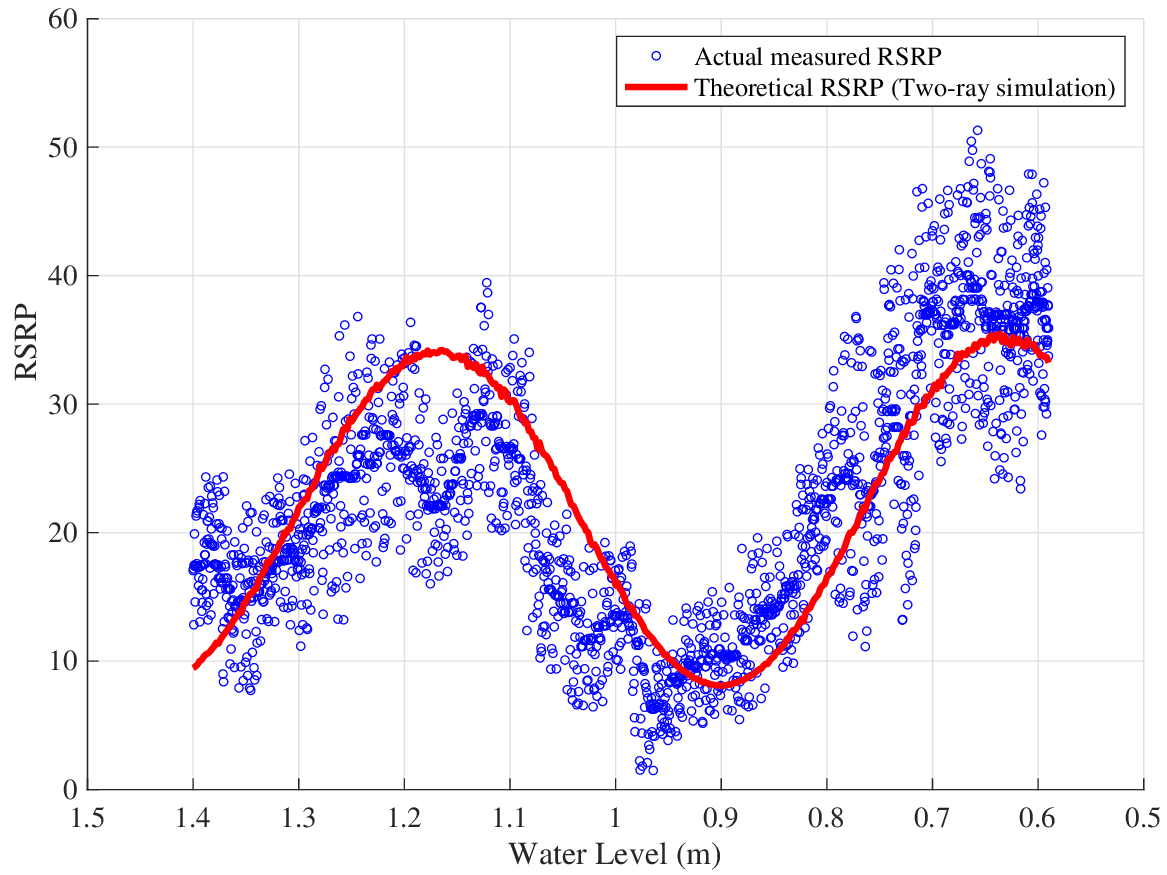}
  \caption{Two-ray simulation (red) versus measured LTE RSRP (blue) as a function of tide height. The match confirms that one reflected path captures the dominant dependence on water level.}
  \label{fig:sim_vs_meas}
\end{figure}

The agreement confirms that variations in received power track the tide height in a predictable way.  The next section introduces a wavelet-based framework that isolates the tide-driven component and pinpoints the moments when the tide reverses.

%
%
%

\section{Wavelet-Based Tide-Feature Extraction}\label{sec:wavelet}

Received-power traces exhibit two very different time scales: a slow, almost sinusoidal drift imposed by the tide and short-lived excursions caused by boats, fading, or transmitter drift.  A representation that resolves a signal both in time and in frequency is therefore required.  The CWT provides that capability and is widely used for geophysical and biomedical records with comparable behaviour \cite{Mallat1999}.

We use the CWT because it handles our tide-affected signal recordings better than any other method. A short-time Fourier transform must choose one fixed window, forcing a compromise between time and frequency detail. The CWT sidesteps that compromise: its window stretches at large scales to follow the slow tide envelope and contracts at small scales to capture the rapid water-level changes. This flexible time–frequency view reveals exactly when the tide-driven part of the signal strengthens or weakens. Energy at tide scales stands out clearly, while rapid disturbances from boats, wind ripples, or RF noise shift to higher pseudo-frequencies and are easy to filter out. Because each wavelet analyses only a short segment of data, brief gaps from SDR retuning affect only that local region instead of corrupting the whole record. Finally, CWT implementations that keep every sample (without down-sampling) run in time proportional to the data length, enabling real-time processing on modest embedded hardware.

For a real-valued trace \(x(t)\) the CWT is
\begin{equation}
W(a,b)=\int_{-\infty}^{\infty}
        x(t)\,\frac{1}{\sqrt{|a|}}\,
        \psi^{*}\!\Bigl(\tfrac{t-b}{a}\Bigr)\,\mathrm{d}t ,
\label{eq:cwt_def}
\end{equation}
where  
\begin{itemize}
\item \(a>0\) is the scale: large \(a\) captures low-frequency, slow content, while small \(a\) resolves high-frequency, fast content;
\item \(b\) is the translation that slides the wavelet along the record;
\item \(\psi(t)\) is the mother wavelet—we use the analytic Morlet kernel \(\psi(t)=\pi^{-1/4}e^{j2\pi f_0 t}e^{-t^{2}/2}\).
\end{itemize}

With the Morlet kernel the centre frequency at scale \(a\) is \(f\approx f_0/a\).  Thus wavelet scales map directly to physical periods: for example, a \(30\;\mathrm{min}\) oscillation appears at \(f\approx(30\;\mathrm{min})^{-1}\).  Our tide-induced power variation falls between \(\sim10\;\mathrm{min}\) (fastest rise or fall) and
\(\sim120\;\mathrm{min}\) (near standstill), so we analyse only the corresponding tide band
\begin{equation}
\begin{aligned}
\mathcal{A} & = \Bigl\{\,a : f_0/a \in [\,f_{\min},\,f_{\max}] \Bigr\}, \\
f_{\min} & =\tfrac{1}{120}\,\text{Hz}, \qquad
f_{\max} =\tfrac{1}{10}\,\text{Hz}.
\end{aligned}
\label{eq:tideband}
\end{equation}
Energy outside \(\mathcal{A}\) arises from interference, boat wakes, or equipment noise and carries little information about the true water level.  Restricting the transform in \eqref{eq:cwt_def} to this band therefore focuses the analysis on exactly the scales that matter for tide estimation.

\subsection{Connection to the two–ray model}\label{sec:link_2ray}

Taking the magnitude square of the complex field in \eqref{eq:rx_cplx} and setting \(|\rho|\!\simeq\!1,\,
\phi_\rho\simeq\pi\) gives
\begin{equation}
P_{r,k}(h)
  = P_{0,k}\Bigl[\,1-\cos\!\bigl(2\pi \Delta d(h)/\lambda\bigr)\Bigr].
\label{eq:prh_repeat}
\end{equation}
With the far-field approximation \(\Delta d(h)\!\approx\!(2h/\!d)\,(h_t-h_r)\) the bracket in
\eqref{eq:prh_repeat} becomes a pure cosine in \(h\).  Absorbing the constant offset \(P_{0,k}\) into \(A\) and the cosine amplitude into \(B\) gives the simplified baseband envelope
\begin{equation}
p\bigl(h(t)\bigr)=A+B\cos \!\bigl(k\,h(t)\bigr),\qquad
k=\frac{2\pi\,(h_t-h_r)}{\lambda d}.
\label{eq:p_of_h}
\end{equation}
This expression contains only measurable geometric quantities \((h_t,h_r,d)\) and the carrier wavelength \(\lambda\).

\paragraph*{Signal model}
The recorded trace is
\begin{equation}
x(t)=p\bigl(h(t)\bigr)+\eta(t),
\label{eq:x_of_t}
\end{equation}
where \(\eta(t)\) lumps together higher-frequency clutter (boat wakes, wind ripples, RF interference, and receiver noise).

To relate the wavelet coefficients to the tide rate, we approximate the smooth envelope \(p\bigl(h(t)\bigr)=A+B\cos(kh(t))\) in the neighbourhood of a reference instant \(b\).  First we expand the tide height itself:
\begin{equation}
h(t)=h(b)+h'(b)\,\Delta t+\tfrac12 h''(b)\,\Delta t^{2}+O(\Delta t^{3}),
\qquad \Delta t\triangleq t-b .
\label{eq:h_taylor}
\end{equation}
Inserting \eqref{eq:h_taylor} into \(p(h)\) and retaining terms only up to first order in \(\Delta t\) gives
\begin{align}
p\!\bigl(h(t)\bigr)
 &= p\!\bigl(h(b)\bigr)
    +\frac{\mathrm{d}p}{\mathrm{d}h}\Big|_{h(b)}\,h'(b)\,\Delta t
    +O(\Delta t^{2})\notag\\
 &= p_0(b)+p_1(b)\,\Delta t+O(\Delta t^{2}),
\label{eq:p_taylor}
\end{align}
where
\begin{align}
p_0(b) &= A+B\cos\!\bigl(kh(b)\bigr),            \label{eq:p0_def}\\[4pt]
p_1(b) &= -Bk\,h'(b)\sin\!\bigl(kh(b)\bigr).      \label{eq:p1_def}
\end{align}

Here \(p_0(b)\) represents the instantaneous power level, while \(p_1(b)\) multiplies \(\Delta t\) and contains the tide velocity \(h'(b)\); the factor \(-Bk\sin(kh(b))\) modulates its sign and magnitude but remains finite throughout the tide cycle.  Quadratic and higher terms are \(O(\Delta t^{2})\).  For the largest wavelet scale we use (\(a\le 10\; \mathrm{min}\)) the Morlet window covers at most \(|\Delta t|\le 20\; \mathrm{min}\), so these higher-order terms are two orders of magnitude smaller than the linear term when compared with the tide period and can safely be neglected. 

\begin{figure*}[t]
  \centering
  \subfloat[Sensor record\label{fig:tide_vs_cwt:a}]{
    \includegraphics[width=0.45\linewidth]{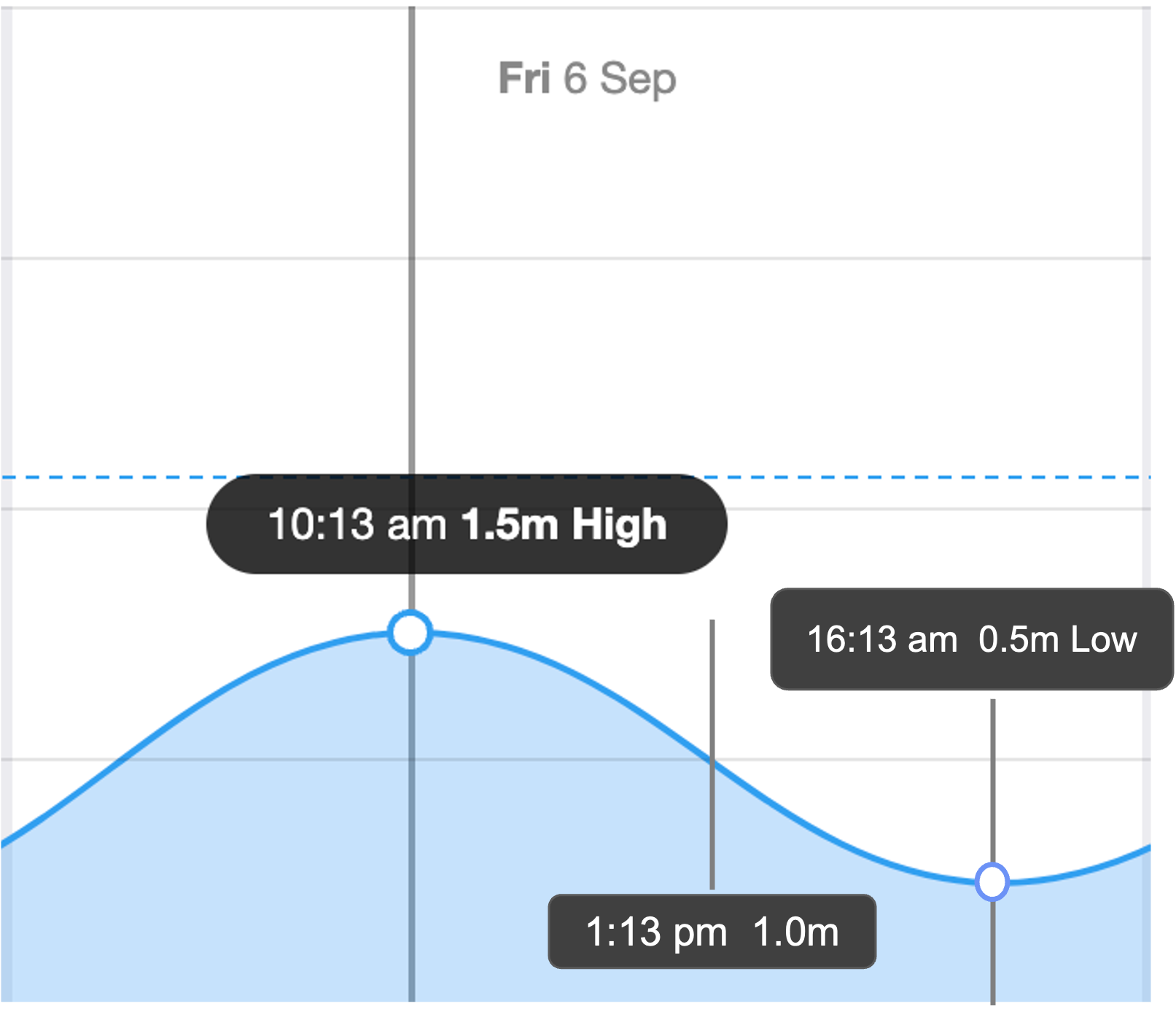}}
  \hfill
  \subfloat[CWT of RSRP of LTE signals\label{fig:tide_vs_cwt:b}]{
    \includegraphics[width=0.48\linewidth]{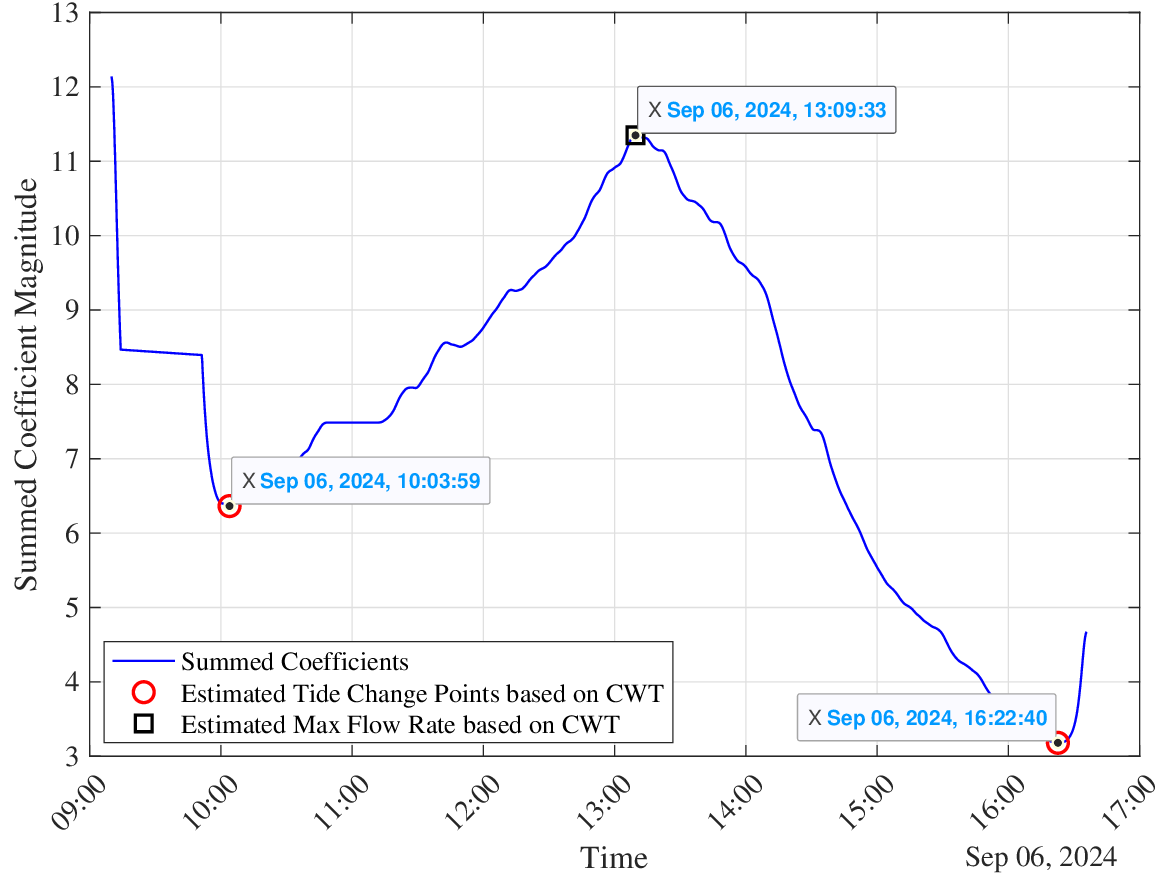}}
  \caption{(a) Sensor-measured tide height.  
           (b) Summed CWT magnitude $S(b)$ (blue); high/low detections
           (red circles) and maximum-flow detection (black square).}
  \label{fig:tide_vs_cwt}
\end{figure*}

\subsection{Link between wavelet magnitude and tide rate}\label{sec:lemma}

We now prove that the CWT magnitude is proportional to the local tide velocity \(h'(b)\).

\begin{lemma}\label{lem:rate}
For any scale \(a\in\mathcal{A}\) whose pseudo-period is small relative to the tidal semidiurnal cycle,
\begin{equation}
|W(a,b)| \;=\; C(a,b)\,|h'(b)| \;+\; O(a^{2}),
\label{eq:Wa_final}
\end{equation}
with
\begin{equation}
C(a,b)=B\,k\,|\sin(kh(b))|\,|M_1|\,a^{1/2},
\quad
M_1=\!\int_{-\infty}^{\infty}\!\tau\,\psi^{*}(\tau)\,d\tau .
\label{eq:Cab_final}
\end{equation}
Here \(k=2\pi(h_t-h_r)/(\lambda d)\) and \(B\) is the cosine amplitude in \eqref{eq:p_of_h}.
\end{lemma}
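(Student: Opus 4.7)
The plan is to substitute the first-order Taylor expansion of \eqref{eq:p_taylor} directly into the CWT definition \eqref{eq:cwt_def} applied to the signal $x(t)$ from \eqref{eq:x_of_t}, and to isolate the contribution of the linear term in $\Delta t$, which is the only one that carries the tide velocity $h'(b)$. Concretely, I would change variables $\tau=(t-b)/a$ so $t=b+a\tau$ and $dt=a\,d\tau$, rewriting
\begin{equation}
W(a,b)=a^{1/2}\int_{-\infty}^{\infty}x(b+a\tau)\,\psi^{*}(\tau)\,d\tau,
\end{equation}
and then expanding $x(b+a\tau)=p_0(b)+a\,\tau\,p_1(b)+O(a^{2}\tau^{2})+\eta(b+a\tau)$. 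This decomposes $W(a,b)$ into a constant-times-$M_0$ term, a linear-times-$M_1$ term, and higher-order remainders.

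Next I would invoke the two defining properties of the analytic Morlet kernel: (i) its (approximate) zero-mean $M_0=\int\psi^{*}(\tau)\,d\tau\simeq 0$, which kills the $p_0(b)$ contribution, and (ii) the bounded first moment $M_1=\int\tau\,\psi^{*}(\tau)\,d\tau$ that appears in \eqref{eq:Cab_final}. The surviving leading term is then proportional to $a^{1/2}\cdot a\cdot p_1(b)\,M_1$. Substituting $p_1(b)=-Bk\,h'(b)\sin(kh(b))$ from \eqref{eq:p1_def} and taking absolute values immediately produces the stated product structure $C(a,b)\,|h'(b)|$, with the sign of $h'(b)$ absorbed into the magnitude and the $\sin(kh(b))$ factor accounting for the interference geometry. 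I would then cross-check the exponent of $a$ against the paper's normalisation $1/\sqrt{|a|}$ in \eqref{eq:cwt_def} to confirm the prefactor in \eqref{eq:Cab_final}.

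For the remainder, the quadratic Taylor term contributes $a^{5/2}\tfrac12 p_2(b)\int\tau^{2}\psi^{*}(\tau)\,d\tau$ and all higher terms are even smaller relative to the leading linear piece; since the ratio scales like $a^{2}$, they are collected into the $O(a^{2})$ bound of \eqref{eq:Wa_final}. The clutter term $\eta(t)$ can be absorbed in the same remainder because its spectral support lies outside the tide band $\mathcal{A}$ and the Morlet kernel is frequency-selective, so $\int\eta(b+a\tau)\psi^{*}(\tau)d\tau$ is exponentially suppressed for any $a\in\mathcal{A}$.

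The main obstacle is justifying the vanishing-mean step: for the analytic Morlet, $M_0$ is not exactly zero but only exponentially small in $f_0$, so a clean proof requires either adopting the admissibility-corrected Morlet (subtracting a small Gaussian so $M_0=0$ exactly) or explicitly bounding the residual $a^{1/2}p_0(b)M_0$ and folding it into the error. A secondary technical point is controlling the Taylor remainder \emph{uniformly} on the effective support of the dilated wavelet, which requires $h\in C^{2}$ with $|h''|$ bounded on the window $|\Delta t|\lesssim a$; for the semidiurnal tide this is evidently true, but it should be stated as a hypothesis on $h(t)$ so that the $O(a^{2})$ error claim is honest.
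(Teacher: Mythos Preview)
Your approach is essentially identical to the paper's: insert the first-order Taylor expansion of $p(h(t))$ into the CWT, use the (assumed) zero mean of the Morlet to annihilate the $p_0(b)$ term, change variables $\tau=\Delta t/a$ to extract $a^{3/2}M_1\,p_1(b)$ as the leading contribution, and then substitute $p_1(b)=-Bk\,h'(b)\sin(kh(b))$ before taking magnitudes. Your caveats about $M_0$ not being exactly zero for the uncorrected Morlet and about needing a $C^2$ bound on $h$ are in fact more careful than the paper, which simply asserts ``Morlet has zero mean'' and disposes of $\eta(t)$ via a zero-mean argument rather than your band-selectivity argument.
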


\begin{proof}
The CWT centred at \(b\) and scale \(a\) is
\begin{equation}
W(a,b)=\!\int_{-\infty}^{\infty}
        x(t)\,\psi_{a}^{*}(t-b)\,dt,
\; \psi_{a}(t)=a^{-1/2}\psi(t/a).
\label{eq:CWT_def}
\end{equation}
With \(x(t)=p(h(t))+\eta(t)\) and \(\eta(t)\) zero-mean,
\[
W(a,b)=\!\int p(h(t))\psi_{a}^{*}(t-b)\,dt + O(a^{2}).
\]
Insert the first-order envelope \(p(h(t))=p_0(b)+p_1(b)\Delta t+O(\Delta t^{2})\) from \eqref{eq:p_taylor}:
\[
\arraycolsep=2pt
\begin{array}{l}
W(a,b)=p_0(b)\!\int\!\psi_{a}^{*}(\Delta t)\,d\Delta t\\
\phantom{W(a,b)=}
          +p_1(b)\!\int\!\Delta t\,\psi_{a}^{*}(\Delta t)\,d\Delta t
          +O(a^{2}).
\end{array}
\]
The first integral is zero (Morlet has zero mean).  For the second, set \(\tau=\Delta t/a\):
\[
\int\!\Delta t\,\psi_{a}^{*}(\Delta t)\,d\Delta t
     =a^{3/2}\!\int_{-\infty}^{\infty}\tau\psi^{*}(\tau)\,d\tau
     =a^{3/2}M_1 .
\]
Hence \(W(a,b)=p_1(b)\,a^{3/2}M_1+O(a^{2})\).  Using \(p_1(b)=-Bk\,h'(b)\sin(kh(b))\) from \eqref{eq:p1_def} and taking
magnitudes gives \eqref{eq:Wa_final} with \(C(a,b)\) as in \eqref{eq:Cab_final}.
\end{proof}

Constant \(B\) and factor \(k\) are set by geometry; the wavelet moment \(|M_1|\) is fixed and non-zero; and \(|\sin(kh(b))|\) varies smoothly between 0 and 1.  Thus \(C(a,b)\) is bounded and slowly changing, so \eqref{eq:Wa_final} shows the CWT magnitude is driven mainly by the tide speed: it is smallest when tide is changing (\(h'(b)\approx0\)) and grows as the flow accelerates.

\subsection{Summed-coefficient feature}\label{sec:sumcoef}

Lemma \ref{lem:rate} states that for every scale \(a\in\mathcal{A}\) the magnitude \(|W(a,b)|\) is proportional to the absolute tide rate \(|h'(b)|\) up to a slowly varying factor.  Summing the magnitudes over all tide-band scales therefore reinforces the common \(|h'(b)|\) component while averaging out scale-specific clutter:
\begin{equation}
S(b)=\sum_{a\in\mathcal{A}} |W(a,b)|.
\label{eq:sumcoef}
\end{equation}
Because the scale factor varies only gently with \(a\), the curve \(S(b)\) inherits three key properties observed in every data set. First, local minima coincide with high and low water, where \(h'(b)\approx0\).  Second, broad maxima appear near mid-ebb and mid-flood, where the tide flows fastest.  Third, the feature is robust, i.e., slight timing gaps or narrowband interference at one scale are averaged out by the sum.

\subsection{Validation against a tide-sensor record}\label{sec:tide_match}

\paragraph{Online extraction.}  
By calculating $S(b_k)$ once per minute, Algorithm \ref{alg:online} buffers the last \SI{45}{min} of history, waits a fixed \SI{5}{min} to guard against future excursions, and tags the present sample as a tide turn if it is the lowest point in that window or as a peak flow if it is the highest. The procedure is single-pass and $O(N)$, so it runs easily in real time on an embedded SDR.

\begin{algorithm}[t]
\small
\caption{On-line detection of tide turn and peak flow}
\label{alg:online}
\begin{algorithmic}[1]
\Require Stream $\{S(b_k)\}$ sampled each minute;\\
         Look-back $T_{\text{back}} = \SI{45}{min}$;\\
         Look-ahead $T_{\text{fwd}} = \SI{5}{min}$
\For{$k = 1,2,\dots$}
    \State $\mathcal{W} \gets \{j : b_k - T_{\text{back}} \le b_j \le b_k + T_{\text{fwd}} \}$
    \If{$S(b_k) = \min\limits_{j \in \mathcal{W}} S(b_j)$}
        \State Label $b_k$ as {High/Low-Water}
        \State Skip the next \SI{5}{min} of samples
    \ElsIf{$S(b_k) = \max\limits_{j \in \mathcal{W}} S(b_j)$}
        \State Label $b_k$ as {Max-Flow}
    \EndIf
\EndFor
\end{algorithmic}
\end{algorithm}

Figure \ref{fig:tide_vs_cwt}(a) shows the water-level record from the Meadowbank Wharf tide sensor maintained by the New South Wales monitoring network\,\cite{willyweather}.  Figure \ref{fig:tide_vs_cwt}(b) plots the summed CWT magnitude $S(b)$ of \eqref{eq:sumcoef} derived from the \SI{2.6598}{GHz} LTE capture taken a few metres from that sensor on the same day.  Red circles mark the instants declared as \textsc{High/Low-Water} and the black square marks the \textsc{Max-Flow} instant produced by the online Algorithm \ref{alg:online}. The estimated times differ from the gauge times by less than the \SI{15}{min} reporting interval of the sensor, demonstrating that the radio-based method tracks the true tide with practical accuracy.

\subsection{Extension to multiple base stations}\label{sec:wavelet_multi}

The analysis above treated a single serving cell. In practice, several LTE cells (distinct carriers and/or sectors) may be observable at the same receiver and time. Each cell views the same tide trajectory \(h(t)\) but with its own slowly varying scale and disturbance pattern, set by geometry, carrier frequency, and local clutter. We therefore apply the CWT per cell and then fuse the resulting tide–band features on a common time grid.

Let \(\ell\in\{1,\dots,L\}\) index the detectable cells. For each cell \(\ell\), we construct an RSRP time series, compute the CWT \(W_{\ell}(a,b)\) over the tide band \(\mathcal{A}\) in \eqref{eq:tideband}, and form the per–cell magnitude sum
\begin{equation}
S_{\ell}(b)\;=\;\sum_{a\in\mathcal{A}} \bigl| W_{\ell}(a,b) \bigr| .
\end{equation}
To remove cell-specific offsets and scale, we standardise \(S_{\ell}\) over a rolling window \(\mathcal{W}_b\) (6–12\,h suffices in practice) using the median and median absolute deviation (MAD),
\begin{equation}
\tilde S_{\ell}(b)\;=\;
\frac{\,S_{\ell}(b)-\operatorname{median}_{u\in\mathcal{W}_b}\!S_{\ell}(u)\,}
     {\,\operatorname{MAD}_{u\in\mathcal{W}_b}\!S_{\ell}(u)\,}.
\end{equation}
All cells are placed on the same absolute time axis. If small micro-lags arise between cells (e.g., different look angles along the river), an optional integer shift \(\Delta_{\ell}\) can be estimated by short-window cross-correlation and applied to \(\tilde S_{\ell}(b)\); in our captures these shifts were negligible, so \(\Delta_{\ell}=0\) was used.

A fused tide-band feature is obtained by a robust aggregate over the available cells,
\begin{equation}
S_{\text{fused}}(b)\;=\;\operatorname{median}_{\ell}\,\tilde S_{\ell}(b),
\label{eq:S_fused}
\end{equation}
which preserves the characteristic morphology—minima near slack water and broad maxima near peak flow—while suppressing cell-specific disturbances (boats, foliage, transient interference). The online detector of Algorithm~\ref{alg:online} is then applied unchanged to \(S_{\text{fused}}(b)\) with the same \SI{45}{min} look-back and fixed \SI{5}{min} look-ahead, so moving from one to many cells does not alter latency; it increases robustness and spatial diversity.

\section{Experimental Study}\label{sec:exp}

\subsection{Measurement set-up}\label{ssec:exp_setup}

Figure \ref{fig:setup_montage} brings together the link geometry, surroundings, and hardware. The LTE macro-cell providing coverage over the river operates at \(f_c=2.6598\;\text{GHz}\) with a \SI{20}{MHz} channel and sits \(\approx\!\SI{35}{m}\) above ground on a small \(\approx\!\SI{10}{m}\)-high building \((-33.82528^{\circ}\,\text{S},\,151.09172^{\circ}\,\text{E})\).
Across the Parramatta River, \SI{420}{m} away, a four-element log-periodic array receives the downlink (Fig.~\ref{fig:setup_montage}\,c).  The elements face the base station antennas, are mounted vertically, and are tilted \(27^{\circ}\) downward toward the water; their feed-point heights above the ground are \(h_{r,1{:}4}=\{1.95,\,1.80,\,1.65,\,1.50\}\,\text{m}\) (see the side-elevation sketch in Fig.~\ref{fig:setup_montage}\,b).
Table~\ref{tab:geo} lists the key parameters.

\begin{table}[t]
  \centering
  \caption{Key link parameters}
  \label{tab:geo}
  \begin{tabular}{@{}lcc@{}}
    \toprule
    Quantity & Symbol & Value \\ \midrule
    Carrier frequency & \(f_c\) & \SI{2659.8}{MHz} \\
    LTE bandwidth & \(B\) & \SI{20}{MHz} \\
    Tx height (centre) & \(h_t\) & \(\approx\)\SI{45}{m}\\
    Rx heights & \(h_{r,1{:}4}\) & \{1.95, 1.80, 1.65, 1.50\}\,m\\
    Horizontal range & \(d\) & \SI{420}{m}\\
    Snapshots per min & – & \(\approx\)10\\
    Total captures & – & 4216 (\(\approx\) 7 h)\\ \bottomrule
  \end{tabular}
\end{table}

Each antenna feeds a \SI{10}{dB} LNA and an FMCOMMS5/Zynq four-channel SDR; complex baseband I/Q is saved to \texttt{.bb} files at roughly ten snapshots per minute.   A short MATLAB routine reads each file, down-converts, performs LTE cell-search and channel decoding, and logs the per-antenna RSRP, RSSI, and RSRQ values.

Water level was measured on site every \SI{10}{min} by the first author and compared with the public tide record from the New South Wales monitoring network (Meadowbank Wharf station)~\cite{willyweather}.  

Before building the learning features we inspected the raw metrics to verify that the received signal clearly responds to changes in water level.  All records with zero RSRP, RSSI, or RSRQ were discarded and the powers converted from dBm/dB to linear scale.  Remaining samples were cleaned with an IQR test (\(k{=}1\)) followed by a Hampel filter.  Figure~\ref{fig:raw_metrics} shows the resulting RSRP, RSSI, and RSRQ of Antenna 1 plotted against the measured
tide height.  The near-sinusoidal envelopes—especially evident in RSRP and RSSI — demonstrate how strongly the signal strength is affected by the rising and falling tide.
\begin{figure*}[t]
  \centering
  \begin{subfigure}[b]{0.29\textwidth}
    \centering
    \includegraphics[width=\linewidth]{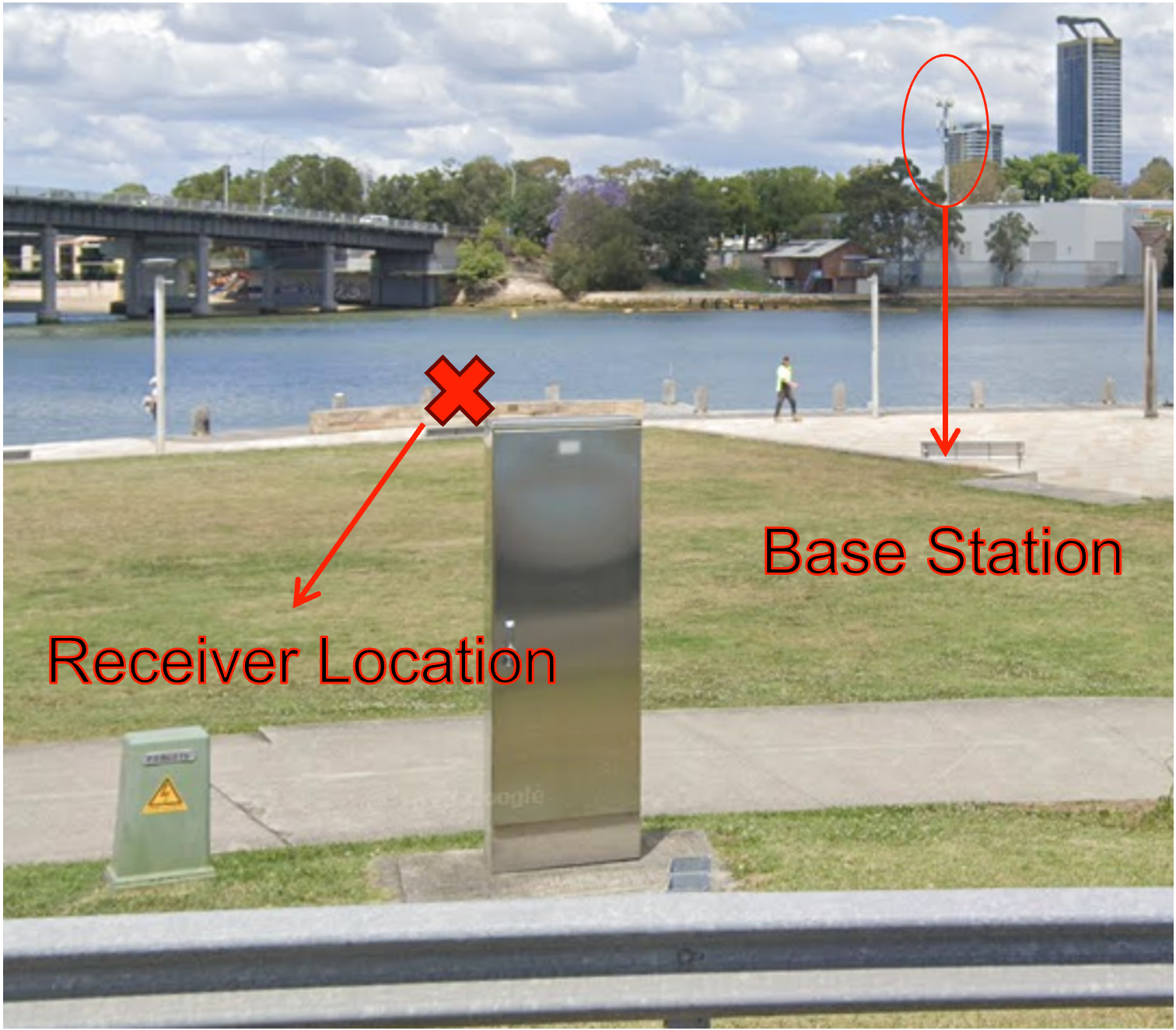}
    \caption{River-bank environment with base station tower (right) and receiver site (left).}
    \label{fig:field_view}
  \end{subfigure}
  \begin{subfigure}[b]{0.45\textwidth}
    \centering
    \includegraphics[width=\linewidth]{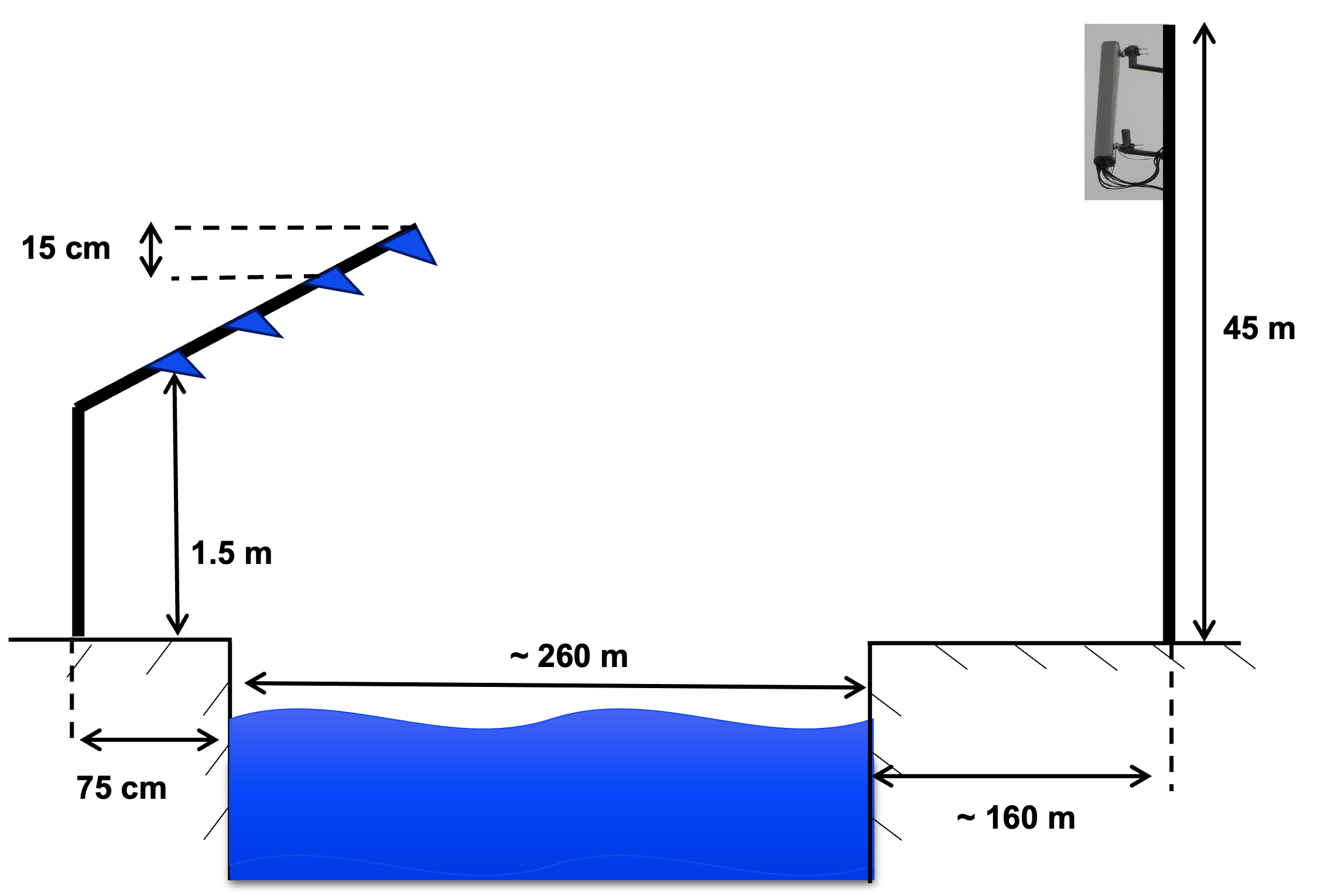}
    \caption{Tx–Rx elevation schematic.}
    \label{fig:geometry}
  \end{subfigure}
  \begin{subfigure}[b]{0.22\textwidth}
    \centering
    \includegraphics[width=\linewidth]{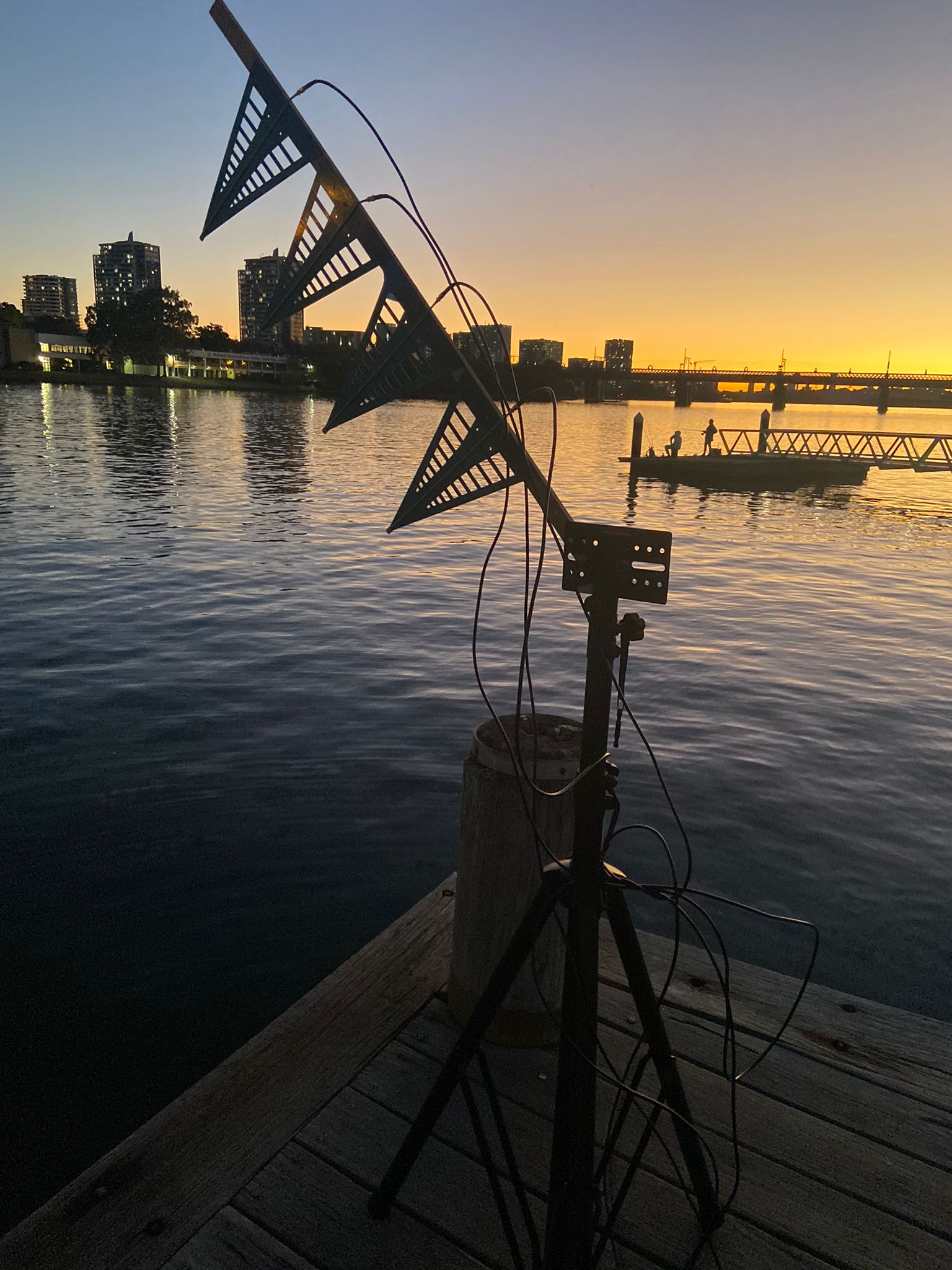}
    \caption{Four-element log-periodic receive antenna array.}
    \label{fig:antenna_photo}
  \end{subfigure}
  \caption{Measurement set-up: (a) environment, (b) geometry,
           and (c) hardware.}
  \label{fig:setup_montage}
\end{figure*}

\begin{figure*}[t]
  \centering
  \begin{subfigure}[b]{.32\textwidth}
    \centering\includegraphics[width=\linewidth]{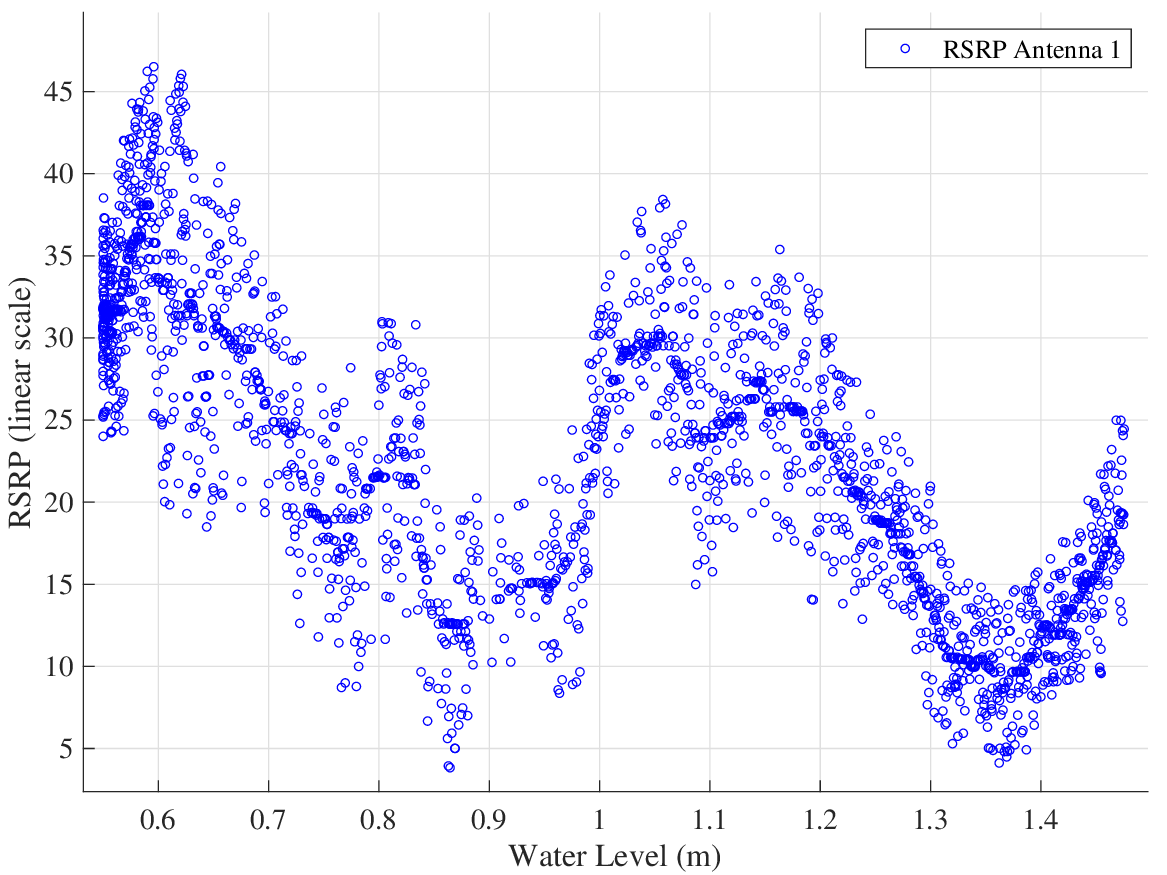}
    \caption{RSRP versus water level.}
    \label{fig:raw_rsrp}
  \end{subfigure}
  \begin{subfigure}[b]{.32\textwidth}
    \centering\includegraphics[width=\linewidth]{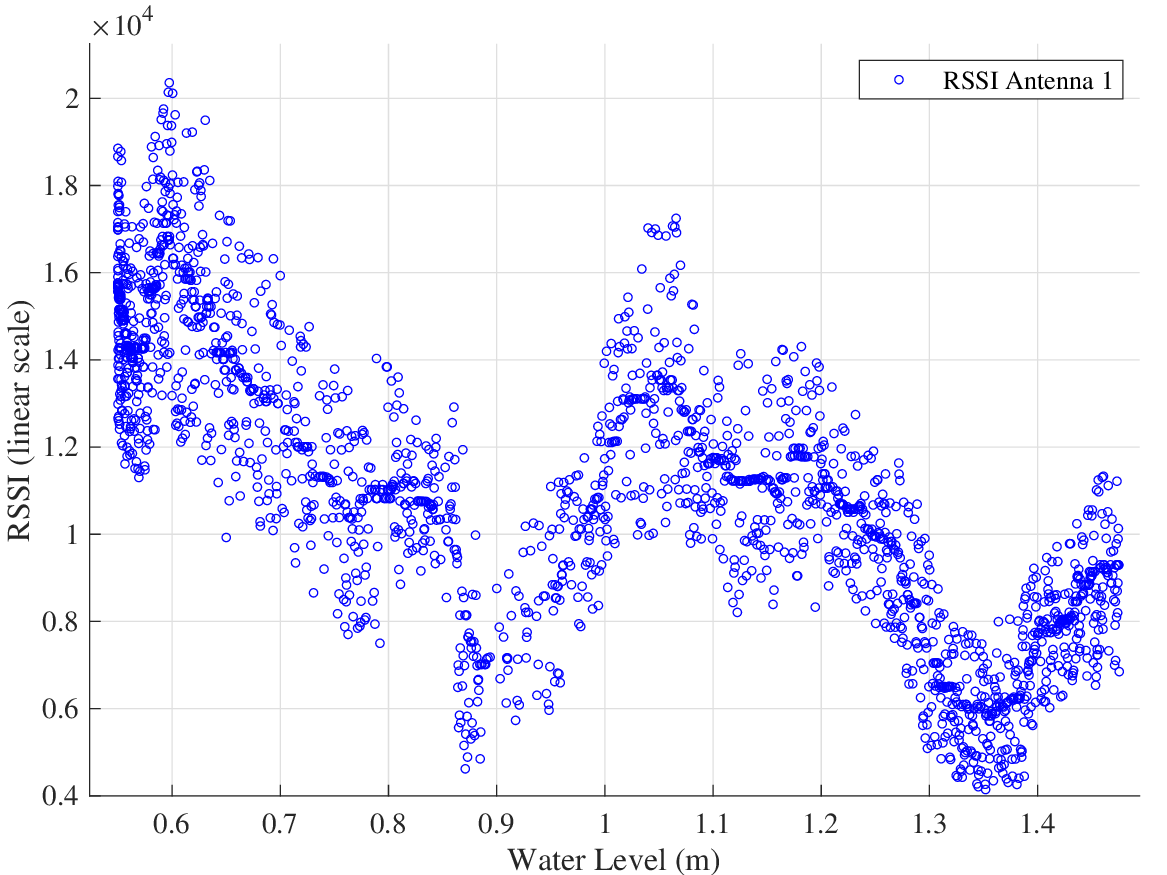}
    \caption{RSSI versus water level.}
    \label{fig:raw_rssi}
  \end{subfigure}
  \begin{subfigure}[b]{.32\textwidth}
    \centering\includegraphics[width=\linewidth]{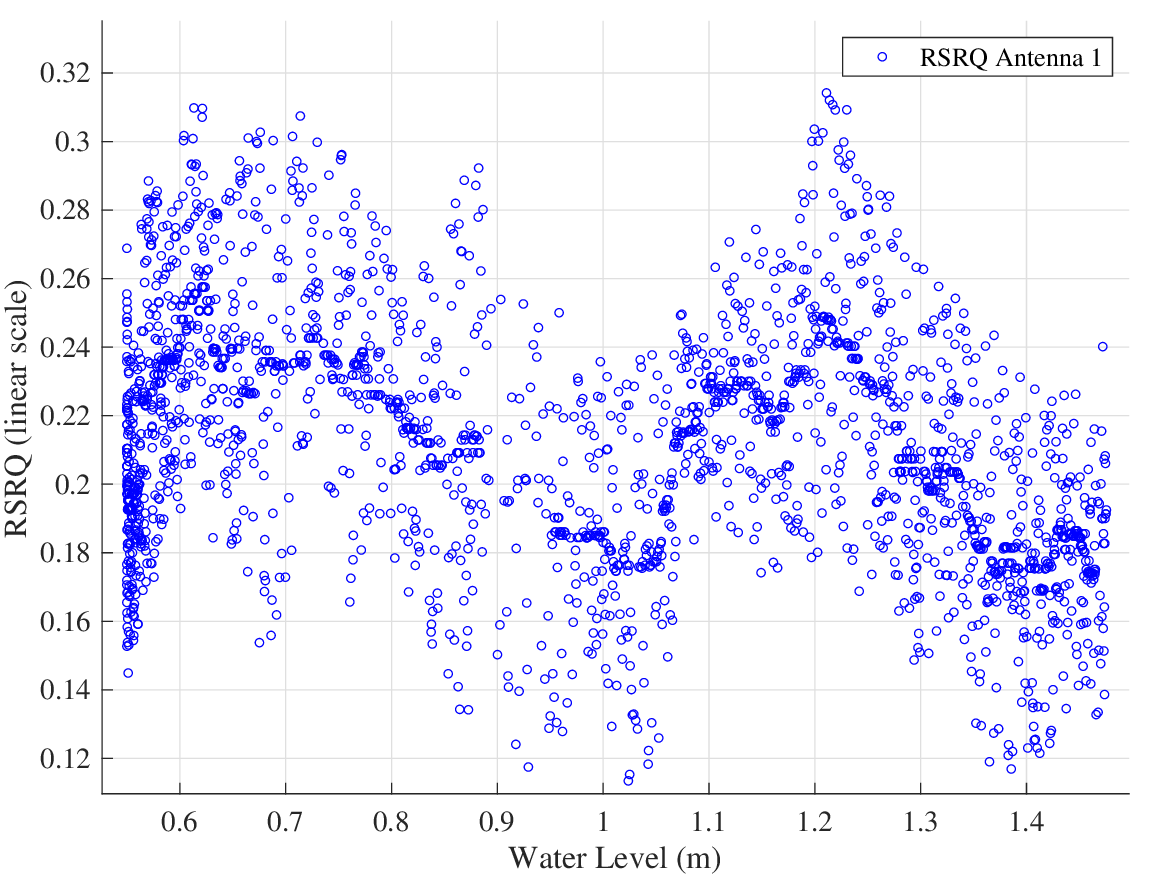}
    \caption{RSRQ versus water level.}
    \label{fig:raw_rsrq}
  \end{subfigure}
  \caption{Scatter of cleaned signal metrics for Antenna 1 versus water level on 6 Sep 2024. }
  \label{fig:raw_metrics}
\end{figure*}

\begin{figure}[t]
  \centering
  \includegraphics[width=\linewidth]{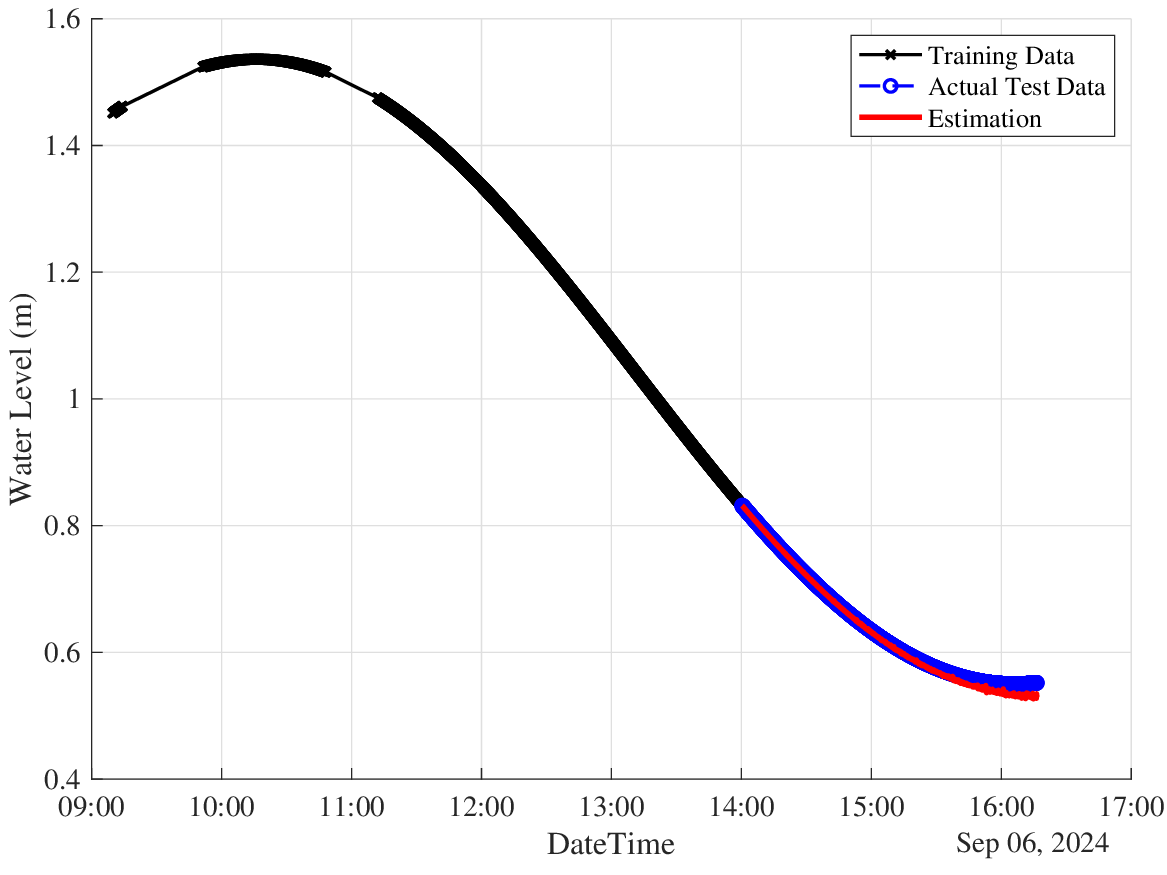}
  \caption{Water-level estimation on 6 Sep 2024. Black \(\times\): training segment; blue dashed: unseen test segment;
           red: water level estimation. Grey gaps correspond to missing SDR files.}
  \label{fig:nn_pred_sep6}
\end{figure}

\subsection{Learning model and feature set}\label{ssec:nn}

Each baseband file is decoded to obtain the four-antenna RSRP, RSSI and RSRQ traces. From these we build a feature vector that contains the per-antenna power readings, every pairwise difference and ratio (more robust to slow
transmitter drift than the raw levels), and the sine and cosine of the semidiurnal tide phase \(\varphi(t)=2\pi t/T_{\text{tide}}\) with \(T_{\text{tide}}\approx\SI{12.6}{h}\).  The phase is extracted by applying the CWT to the RSRP of the signal and computing its scalogram. By identifying and selecting the frequency component that remains dominant over time—corresponding to the tidal cycle—we can trace the phase evolution. Since the tidal period varies across different locations, this approach enables automatic, location-specific extraction of the tide cycle directly from the captured signal data. All features are standardised to zero mean and unit variance before learning.
The regressor is a single-hidden-layer feed-forward neural network with 40 hyperbolic-tangent neurones and a linear output, trained once on the first 60 \% of the record and then applied in a streaming fashion to the remainder.

\subsection{Single-day evaluation}\label{ssec:within}

The seven-hour record of 6 September 2024 comprises 4216 snapshots, roughly ten captures per minute. Snapshots are ordered in time; the first 60 \% form the training set, the next 5 \% provide an early-stopping check, and the final 35 \% are held out for test.  After training, each new snapshot is converted to features, passed through the network, and an online water-level estimate is produced. The overall latency is determined by the \SI{5}{min} forward window required by the wavelet procedure.

Figure \ref{fig:nn_pred_sep6} compares the neural estimate with the reference water level reported by the New South Wales monitoring network \cite{willyweather}.  Over the unseen segment the model attains a root-mean-square error of {0.8\;cm} and a mean-absolute error of {0.5\;cm}.  These sub-centimetre errors, achieved without site-specific tuning, demonstrate that the wavelet-guided feature set, coupled with a modest neural network, is sufficient for precise, real-time water-level tracking.

\begin{figure*}[t]
  \centering
  \begin{subfigure}[b]{0.37\linewidth}
    \centering
    \includegraphics[width=\linewidth, trim=0 0 0 350pt, clip]{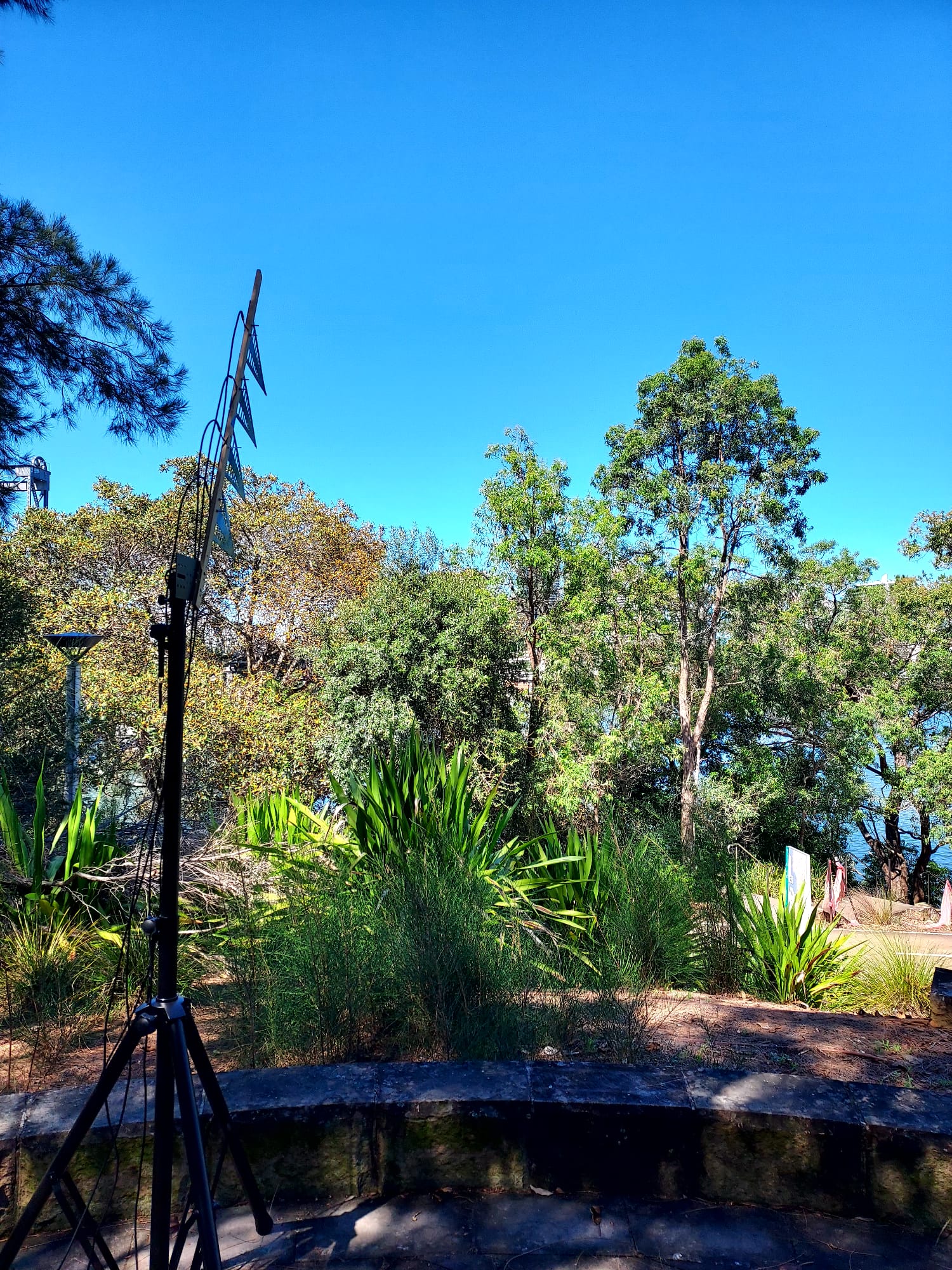}
    \caption{Receiver site on 2 Dec 2024 — trees obstruct the line-of-sight.}
    \label{fig:cross_env}
  \end{subfigure}
  \hfill
  \begin{subfigure}[b]{0.58\linewidth}
    \centering
    \includegraphics[width=\linewidth]{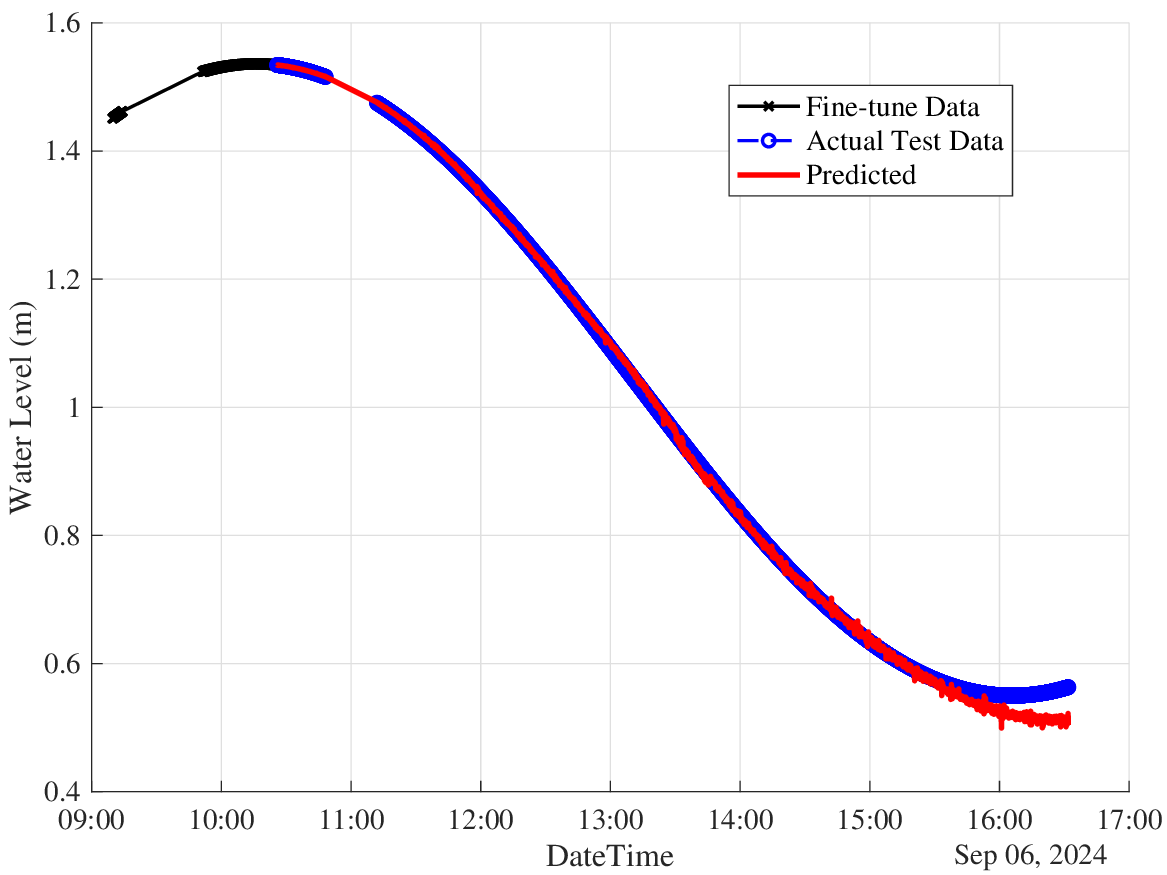}
    \caption{Model trained on 2 Dec, tested on 6 Sep.}
    \label{fig:cross_pred}
  \end{subfigure}
  \caption{Generalisation across measurement dates and propagation conditions.}
  \label{fig:cross_results}
\end{figure*}
\subsection{Cross-date validation: training 2 Dec,
            testing 6 Sep}\label{ssec:crossdate}

To verify that the model remains effective without a clear line-of-sight reflection, a second data set was acquired on 2~December~2024 at a point approximately \SI{510}{m} from the same base station and \SI{41}{m} back from the river bank.  The line-of-sight between the array and the water was obstructed by trees, and ferries passed the site frequently during the six-hour recording (see Fig.,\ref{fig:cross_results}a). The four log-periodic elements were mounted at heights of 2.45, 2.30, 2.15, and \SI{2.00}{m} and tilted \(27^{\circ}\) toward the river.  A total of 2\,933 baseband captures were logged on the usual \(f_c=\SI{2.6598}{GHz}\), \SI{20}{MHz} LTE carrier.

The neural network architecture and feature pipeline are kept unchanged. The network is first trained on the full 2 Dec record ("train-from-scratch"). To adapt to the new date without seeing the entire 6 Sep data stream, the first 10 \% of 6 Sep snapshots (about 410 files) are used for a brief fine-tuning step; the remaining part of data are then fed to the network in chronological order and the output is compared with the tide sensor reference.

Figure \ref{fig:cross_pred} shows the result. Despite dense vegetation and repeated ferry wakes, the network follows the true tide with a root-mean-square error of {1.7\;cm} and a mean-absolute error of {0.8\;cm} over more than five hours of previously unseen data.  The performance remains within a centimetre of the line-of-sight benchmark of Section~\ref{ssec:within}, demonstrating that the wavelet-guided feature set captures geometry-invariant structure and allows the small neural regressor to transfer successfully between dates, sites, and propagation conditions.

\subsection{Long-duration passive capture and CWT validation at 763\,MHz}
\label{sec:long-duration-cwt}

At our initial two riverbank deployments (6~Sep. and 2~Dec.), battery-powered operation limited continuous captures to 6–8\,h. Using the CWT, these datasets already reveal the semidiurnal tidal component ($\approx$12.6\,h). Nevertheless, because they cover less than a full tidal cycle, the evidence is partial and susceptible to boundary effects. To quantify the period conclusively from data alone and to support model design without external priors, we therefore conducted an additional long-duration passive capture at a separate site with stable mains power and co-located sonar water-level ground truth, complementing the earlier deployments.

We deployed a passive receiver at the UTS Haberfield Rowing Club on the Parramatta River (Fig.~\ref{fig:haberfield-setup}). The receiver is an SDR connected to two indoor monopole antennas mounted on a whiteboard, operating in NLoS relative to surrounding infrastructure. A high-precision sonar water-level sensor was installed under the wharf to provide ground truth. Although multiple LTE cells across several bands were observable at this location, all analysis here uses Band~28 (763\,MHz), which provided the strongest and most stable measurements. This operator-independent, indoor passive deployment—requiring no interaction with the cellular network or site-specific infrastructure—enabled multi-hour to multi-day captures spanning full tidal cycles.

\begin{figure}[t]
  \centering
  \includegraphics[width=\linewidth]{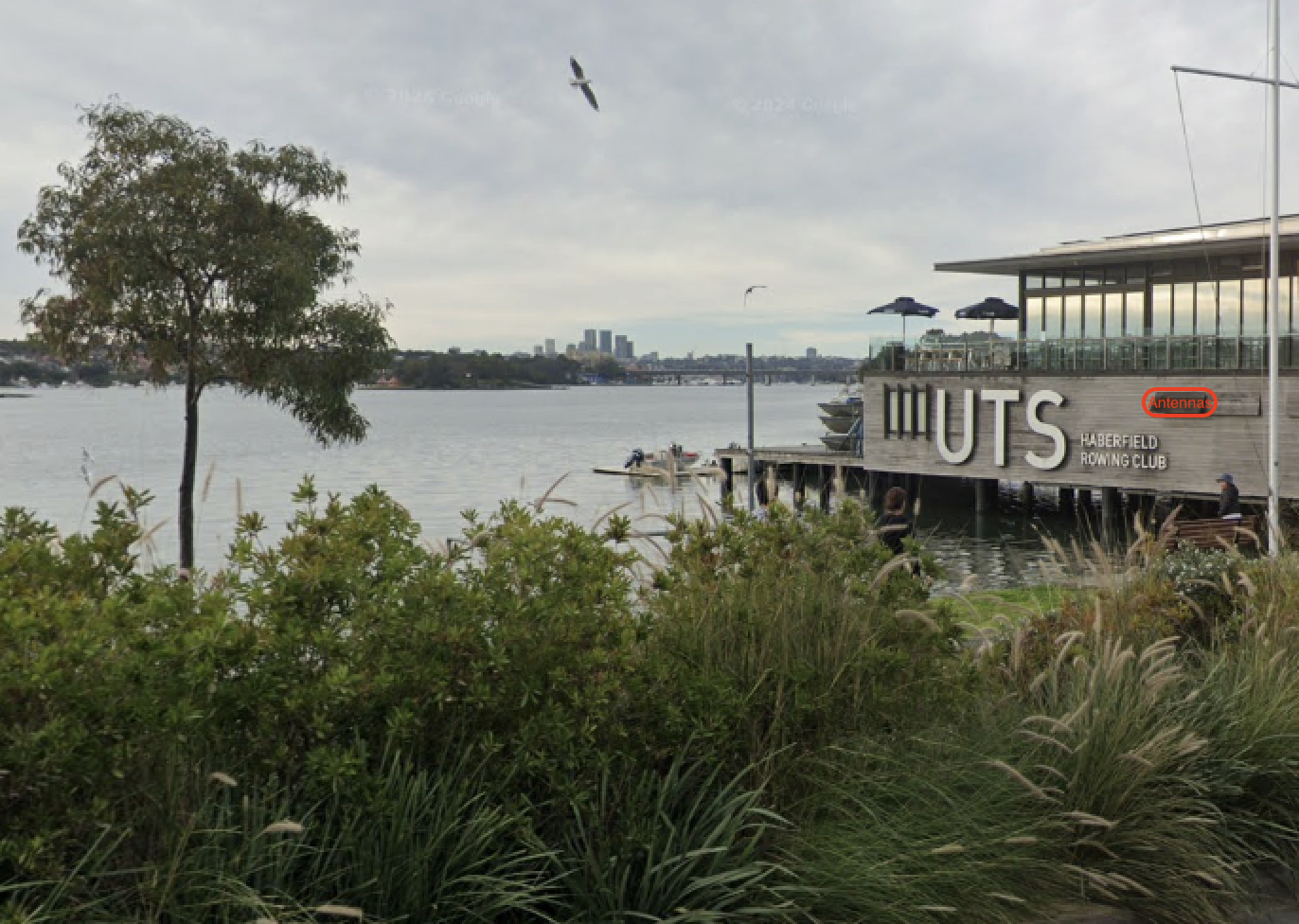}
  \caption{Long-duration passive capture site at the UTS Haberfield Rowing Club. An indoor SDR with two monopole antennas records LTE downlink signals (Band~28, 763\,MHz), while a sonar gauge under the wharf provides water-level ground truth.}
  \label{fig:haberfield-setup}
\end{figure}

Both the sonar water level and LTE time series were uniformly resampled to a 210\,s grid and preprocessed with robust outlier suppression and low-pass filtering. We then computed the CWT of the tide signal. The scalogram exhibits a clear energy ridge in the mili-Hertz (mHz) range associated with semidiurnal tides (Fig.~\ref{fig:cwt-tide-763}). The peak frequency $f_{\text{peak}}$ lies near $0.022$\,mHz, implying
\[
T \,=\, \frac{1}{f_{\text{peak}}} \,\approx\, 12.6~\text{h},
\]
which matches the expected semidiurnal tide at this site. This confirms that long-duration captures here allow direct measurement of the tidal periodicity from data without external priors.

\begin{figure}[t]
  \centering
  \includegraphics[width=\linewidth]{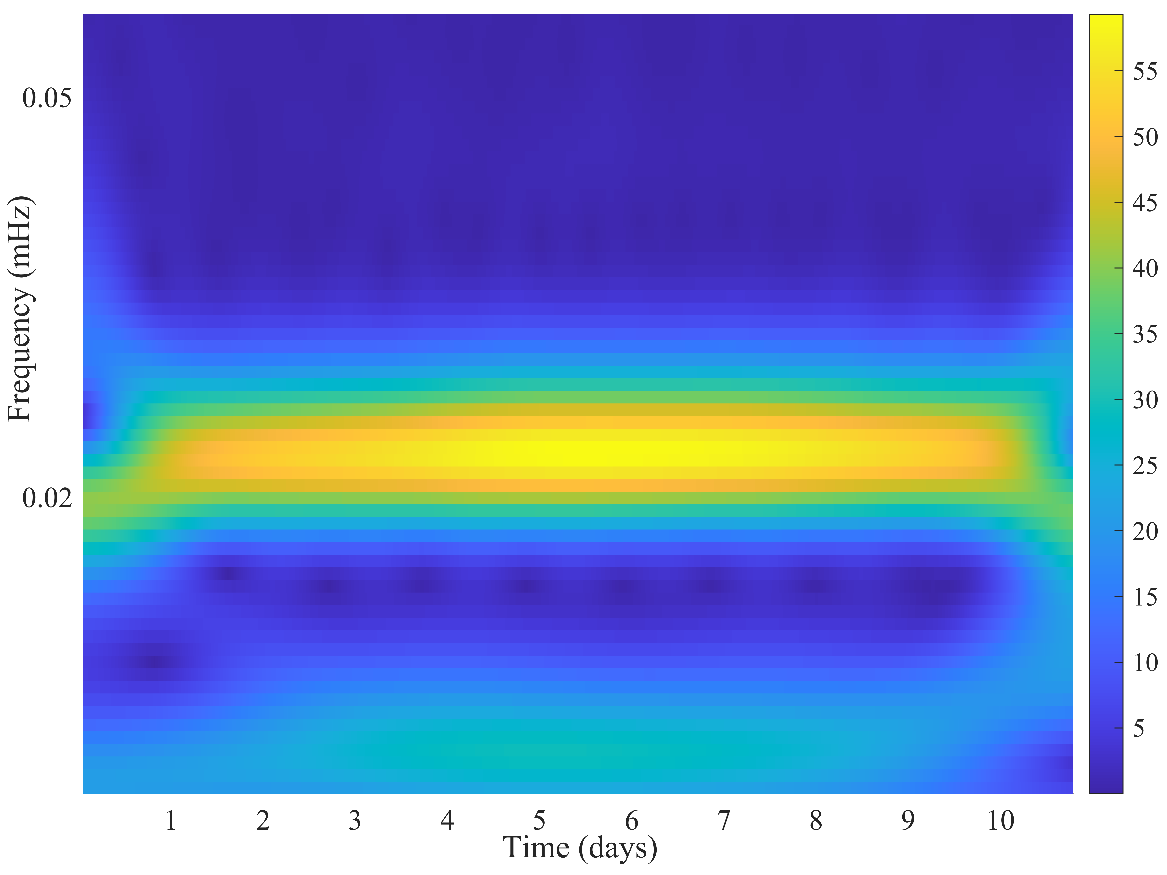}
  \caption{CWT of the tide series during the long-duration capture (763\,MHz session timeline). A strong ridge around $0.022$\,mHz corresponds to a $\sim$12.6\,h semidiurnal tide.}
  \label{fig:cwt-tide-763}
\end{figure}

Applying the identical preprocessing and CWT to the RSRP series at 763\,MHz reveals a distinct low-frequency ridge in the mHz range (Fig.~\ref{fig:cwt-rsrp-763}). In a CWT scalogram, a ridge denotes a contiguous time–frequency track of elevated coefficient magnitude, evidencing a persistent oscillatory component. Here, the ridge indicates that cyclic changes in water level modulate the path difference between direct and water-reflected propagation, producing a corresponding cyclic variation in received power. We do not assume equality between the radio and tide periods; it suffices that a stable low–mHz component is present in RSRP as expected from tide-driven multipath. These observations motivate the use of wavelet-derived features and cyclic structure in the learning pipeline.

\begin{figure}[t]
  \centering
  \includegraphics[width=\linewidth]{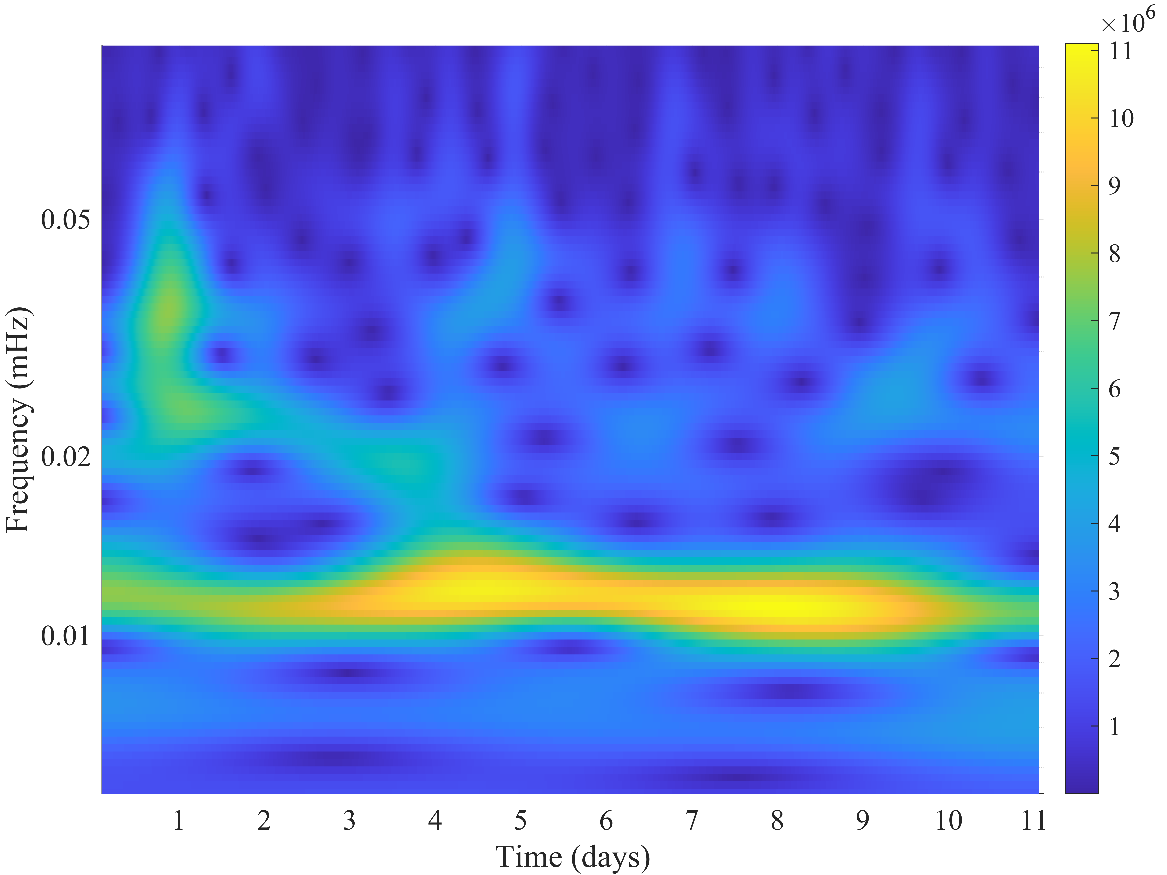}
  \caption{CWT of RSRP at 763\,MHz during the long-duration capture. A persistent low-frequency ridge (contiguous band of high wavelet magnitude) appears in the millihertz range, indicating a stable oscillatory component in received power driven by tide-induced multipath. }
  \label{fig:cwt-rsrp-763}
\end{figure}

\begin{figure}[t]
  \centering
  \includegraphics[width=\linewidth]{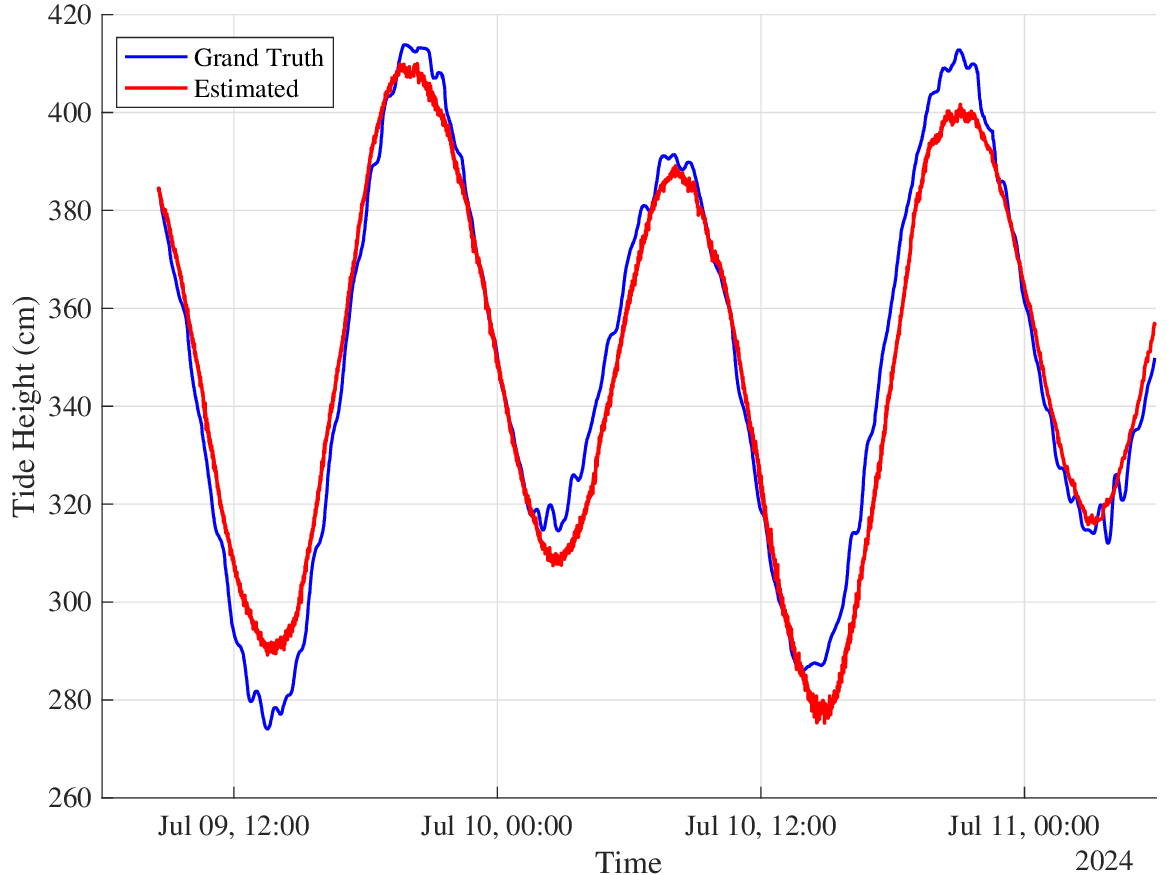}
  \caption{Neural-network water-level estimation on the 763\,MHz long-duration dataset. Estimated water level (red) versus sonar ground truth (blue) over the two-day test period. The model reproduces the tidal variation; higher-frequency perturbations from waves and boat wakes are visible in the ground truth.}
  \label{fig:NN763}
\end{figure}

Using the same 763\,MHz long-duration dataset, we trained a compact feed-forward neural network to estimate water level from LTE link metrics (two antennas) and time features. The data were split chronologically into 75\% training, 10\% validation, and 15\% test. Model selection was based on the validation set; the final model was then trained on train+validation and applied once to the held-out two-day test period. On this test set the model achieved RMSE$=8.785$\,cm and MAE$=7.257$\,cm (train+validation RMSE$=1.262$\,cm, MAE$=1.012$\,cm). The estimates closely follow the ground truth and capture the tidal envelope; short-timescale variations visible in the sonar record, including waves generated by boat traffic, appear as higher-frequency perturbations in the ground truth. Figure~\ref{fig:NN763} illustrates the estimated water level (red) against the ground truth (blue) over the test interval.

\subsection{Multi–base-station wavelet fusion for tide–flow and turn detection}
\label{ssec:multi-cell-exp}

To assess robustness when several base stations are visible at the same receiver, we used the site of Section~\ref{sec:long-duration-cwt} and monitored two downlink carriers from different operators: Optus at 1857.5\,MHz and Vodafone at 2117.5\,MHz (Fig.~\ref{fig:multi_env}). For each base station we formed the RSRP time series, computed the CWT over the tide band, and summed the wavelet magnitudes across scales to obtain a per–base-station tide feature $S_\ell(b)$. Each $S_\ell(b)$ was standardised with a robust 9\,h rolling median/MAD and aligned on a common time axis. The fused curve $S_{\text{fused}}(b)$ was then taken as the pointwise median across the two base stations and resampled to a 1\,min grid. 

Figure~\ref{fig:multi_sfused} plots $S_{\text{fused}}(b)$ against the co-located tide record. The fused tide-band feature preserves the characteristic shape seen with a single base station—peaks when the flow is fastest and minima when the tide is high/low (tide-change times)—while suppressing base-station–specific disturbances (boats, clutter, or sector scheduling differences). The same on-line detector of Algorithm~\ref{alg:online} is applied directly to $S_{\text{fused}}(b)$.

\begin{figure*}[t]
  \centering
  \begin{subfigure}[b]{0.43\linewidth}
    \centering
    \includegraphics[width=\linewidth]{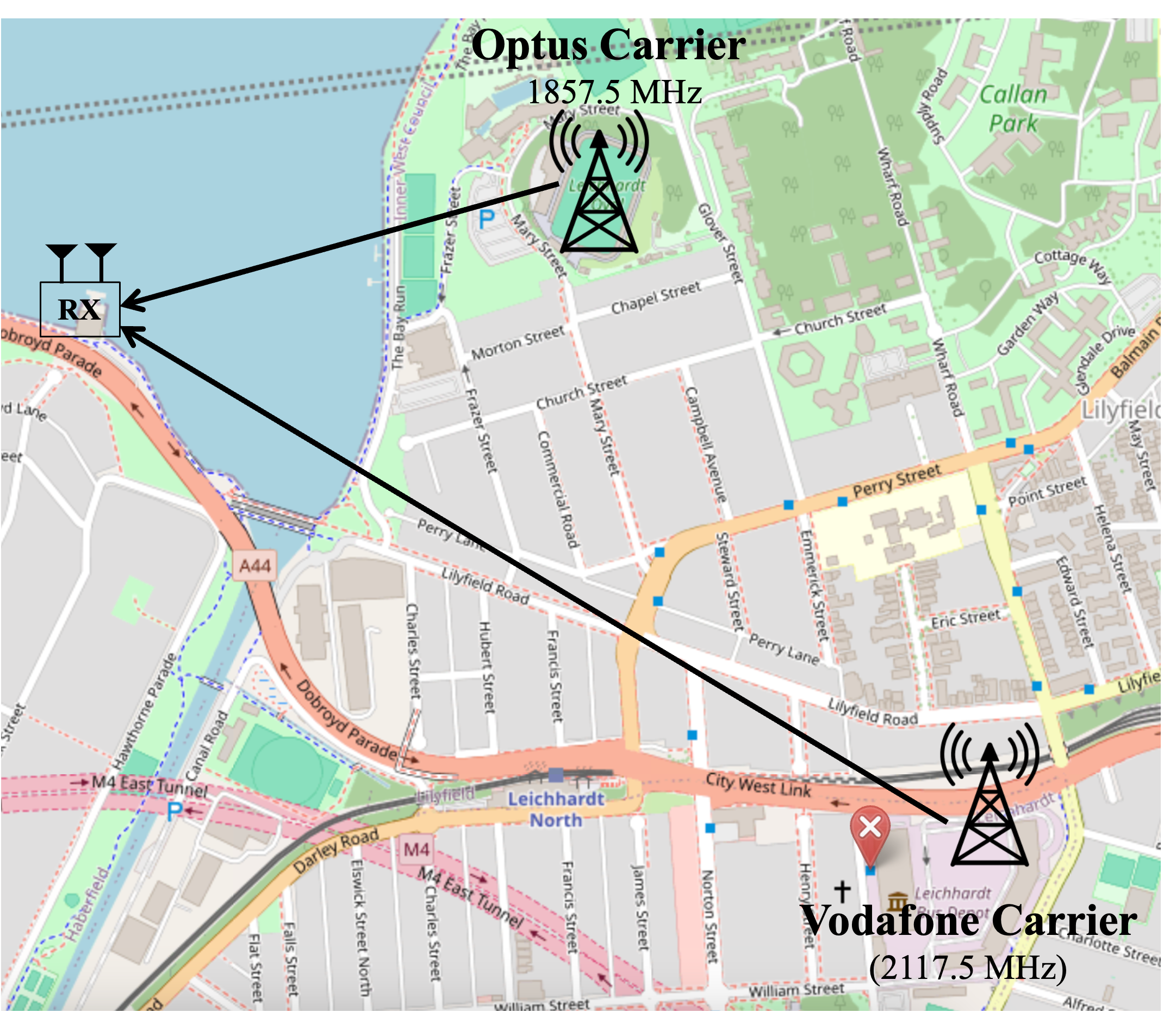}
    \caption{Receiver and two serving base stations (Optus 1857.5\,MHz; Vodafone 2117.5\,MHz).}
    \label{fig:multi_env}
  \end{subfigure}\hfill
  \begin{subfigure}[b]{0.53\linewidth}
    \centering
    \includegraphics[width=\linewidth]{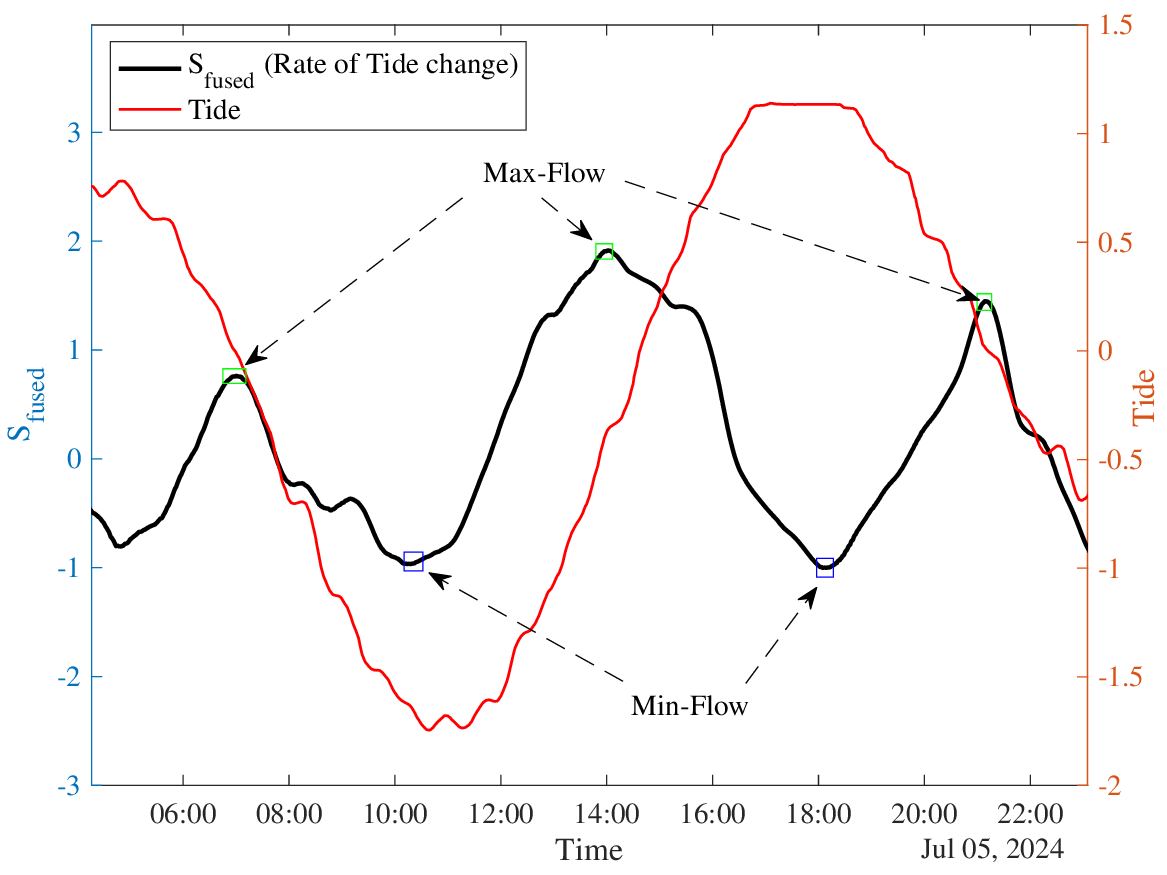}
    \caption{Fused tide-band feature $S_{\text{fused}}$ and tide record. $S_{\text{fused}}$ reflects the tide-rate: peaks mark maximum-flow and minima coincide with high/low tide (tide turns).}
    \label{fig:multi_sfused}
  \end{subfigure}
  \caption{Multi–base-station wavelet fusion. Independent CWTs per carrier are median-fused to obtain a single tide-band feature that tracks the tide-rate (flow speed) over time.}
  \label{fig:multi_results}
\end{figure*}

For learning, per–base-station features (RSRP/RSSI/RSRQ levels and within-base-station differences/ratios) can be z-scored per base station and concatenated together with $S_{\text{fused}}(b)$. If a base station is intermittently unavailable, its features are zero-imputed and flagged by a binary availability mask so the regressor benefits from diversity when present and remains stable when only a subset is visible. In this paper, however, all reported neural-network results (Section~\ref{ssec:nn}) use a single serving base station; they already achieve strong accuracy, so we did not need multi–base-station learning for the main results. We nevertheless show the wavelet-level multi–base-station fusion and its outcomes here to demonstrate how the approach extends when multiple carriers are available.

\section{Conclusion}\label{sec:concl}
This paper presented a physics-guided, wavelet-based framework that converts standards-compliant LTE power metrics (RSRP, RSSI, RSRQ) into centimetre-scale water-level estimates. A continuous wavelet transform isolates tide-driven structure in the radio signal and produces a summed-coefficient feature that marks high/low-tide times and captures the tide-rate (flow speed) with low latency. Using only receiver-side measurements and commodity hardware—without CSI, array calibration, or network cooperation—the method delivers continuous water-level tracking.

Beyond the two riverbank deployments, we conducted an operator-independent long-duration study at a different location (UTS Haberfield), using a different carrier frequency and an indoor passive setup with co-located sonar ground truth. The wavelet scalogram of the tide record recovers the semidiurnal period directly from data, and a compact model driven by LTE power features tracks the tide over a two-day held-out interval. These results indicate that the approach is not tied to a particular site, frequency, or capture configuration and can operate with commodity receivers in a fully passive manner.

We also demonstrated a multi–base-station extension at the wavelet level. Independent CWTs from two operators were median-fused to produce a single tide-band feature that continues to indicate maximum-flow peaks and high/low-tide minima while smoothing base-station-specific disturbances. This shows how the framework scales to cooperative sensing settings where multiple carriers are available.

We have shown that, across different locations, systems, frequencies, and setups, accurate water-level estimation is achievable from downlink LTE power metrics without CSI, and we demonstrated a wavelet-level multi-base-station fusion that robustly detects tide turns and peak-flow times. Building on this, our next steps are to (i) extend the same wavelet-guided approach to 5G NR using standard NR power metrics while keeping the CWT/summed-coefficient core unchanged; (ii) add water-level estimation from uplink signals as an independent channel to increase resilience; and (iii) integrate rainfall estimation into the model—using downlink signal behaviour—so the system jointly reports water level and rain.





\bibliographystyle{IEEEtran}

\end{document}